\newtheorem{thm}{Theorem}
\newcommand{\cmark}{\ding{51}}%
\newcommand{\xmark}{\ding{55}}
\begin{document}
\title{A Sinkhorn Regularized Adversarial Network for Image Guided DEM Super-resolution using Frequency Selective Hybrid Graph Transformer}
%
\titlerunning{Guided DEM SR using Sinkhorn Regularized Adversarial Graph Transformer}
%
\author{Subhajit Paul\inst{1} \and
Ashutosh Gupta\inst{1}}
\authorrunning{S. Paul et al.}
%
\institute{Space Applications Centre (SAC), ISRO, Ahmedabad, India\\
\email{\{subhajitpaul,ashutoshg\}@sac.isro.gov.in}}
\maketitle              
\begin{abstract}
Digital Elevation Model (DEM) is an essential aspect in the remote sensing (RS) domain to analyze various applications related to surface elevations. Here, we address the generation of high-resolution (HR) DEMs using HR multi-spectral (MX) satellite imagery as a guide by introducing a novel hybrid transformer model consisting of Densely connected Multi-Residual Block (DMRB) and multi-headed Frequency Selective Graph Attention (M-FSGA). To promptly regulate this process, we utilize the notion of discriminator spatial maps as the conditional attention to the MX guide. Further, we present a novel adversarial objective related to optimizing Sinkhorn distance with classical GAN. In this regard, we provide both theoretical and empirical substantiation of better performance in terms of vanishing gradient issues and numerical convergence. Based on our experiments on 4 different DEM datasets, we demonstrate both qualitative and quantitative comparisons with available baseline methods and show that the performance of our proposed model is superior to others with sharper details and minimal errors.
\keywords{Sinhorn loss \and Graph Attention \and Adversarial learning.}
\end{abstract}

\section{Introduction}
\label{sec:intro}
\setlength{\abovedisplayskip}{0pt}
\setlength{\belowdisplayskip}{0pt}
\vspace{-0.3cm}
The Digital Elevation Model (DEM) is a digital representation of any three-dimensional surface. It is immensely useful in precision satellite data processing, geographic information systems, hydrological studies, urban planning \cite{50}, and many other key applications. The main sources of DEM generation are terrestrial, airborne, or spaceborne, depending on the platform used for data acquisition. However, each of these scenarios has its own set of advantages and disadvantages. While elevation models generated using terrestrial and airborne systems have a high spatial resolution, their coverage is quite restricted and they typically suffer from several issues and systematic errors \cite{2}. Space-borne missions such as SRTM, and ASTER \cite{503,504}, on the other hand, have almost global coverage but lack the spatial resolution. Due to the emerging significance and diverse applications of DEM, both its accuracy and resolution have a substantial impact in different fields of operation \cite{4}. However, HR DEM products are expensive, as they require special acquisition and processing techniques. As an alternative to generating HR DEM from scratch, enhancing the resolution (super-resolution) of existing DEM datasets can be seen as the most optimal strategy to address the shortfall. Hence, we intend to take a step in this direction to generate HR DEM and, to make it more tractable, we formulate this problem in an image super-resolution (SR) setting. As shown in Figure \ref{fig:fig1_results}, our primary objective is to synthesize HR DEM provided its coarser resolution and existing False Colour Composite (FCC) of HR MX imagery.\par
\begin{figure}[tb]
    \centering
    \includegraphics[width=\textwidth]{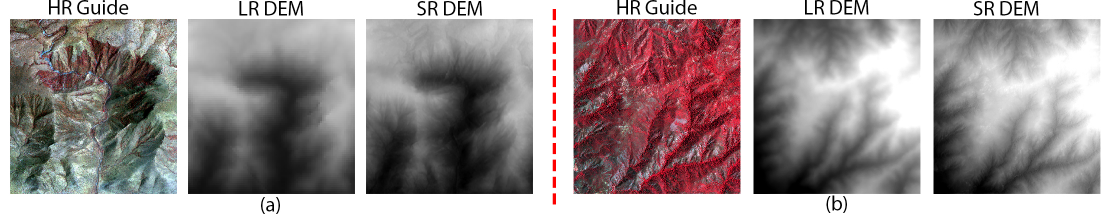}
    \caption{Two sample results of DEM SR consisting HR FCC of NIR(R), R(G), and G(B), Bicubic interpolated LR DEM, and Generated HR DEM, respectively.}
    \label{fig:fig1_results}
\end{figure}
Recent advances in deep learning (DL) show compelling progress over conventional approaches for various computer vision applications like image or video SR. However, we found that very few methods approach the problem of DEM SR, especially, for real-world datasets. We propose a novel framework, which effectively addresses this problem. Our key contributions can be summarized as
\begin{enumerate}
    \item We propose a novel architecture for DEM SR based on a hybrid transformer block consisting of a Densely connected Multi-Residual Block (DMRB) and multi-headed Frequency Selective Graph Attention (M-FSGA),  which effectively utilizes information from an HR MX image as a guide by conditioning it with a discriminative spatial self-attention (DSA).
    \item We develop and demonstrate SiRAN, a framework based on Sinkhorn regularized adversarial learning. We provide theoretical and empirical justification for its effectiveness in resolving the vanishing gradient issue while leveraging tighter iteration complexity.
    \item We generate our own dataset where we take realistic coarse resolution data instead of considering bicubic downsampled HR image as input.
    \item We perform experiments to assess the performance of our model along with ablation studies to show the impact of the different configuration choices.
\end{enumerate} 
\section{Related Work}
\label{sec:formatting}
Traditional DEM super-resolution (SR) methods include interpolation-based techniques like linear, and bicubic, but they under-perform at high-frequency regions producing smoothed outputs. To preserve edge information, multiple reconstruction-based methods like steering kernel regression (SKR) \cite{9} or non-local means (NLM) \cite{10}, have also been proposed. Though they can fulfill their primary objective, they cannot produce SR DEM at a large magnification factor. \par
DEM is an essential component for RS applications, but research on DEM SR is still limited. After the introduction of SR using Convolutional Neural Network (SRCNN) in the category of single image SR (SISR), its variant D-SRCNN was proposed by \cite{20} to address the DEM SR problem. Later, Xu \textit{et al.} \cite{21} uses the concept of transfer learning where an EDSR (Enhanced Deep SR) \cite{22}, pre-trained over natural images, is taken to obtain an HR gradient map which is fine-tuned to generate HR DEM. After the introduction of Generative Adversarial Network (GAN), a substantial number of methods have evolved in the field of SR like Super-resolution using GANs (SRGAN). Based on this recently, Benkir \textit{et al.} \cite{23} proposed a DEM SR model, namely D-SRGAN, and later they suggested another model based on EffecientNetV2 \cite{24} for DEM SISR.  {Although D-SRGAN produces good perceptual SR DEMs, it usually results in noisy predicted samples.} They also suffer from issues of conventional GAN, mode collapse, and vanishing gradients. To resolve this, Wasserstein GAN (WGAN) \cite{31} and its other variants \cite{500} have been introduced. However, these methods are computationally expensive, which can be untangled by introducing an entropic regularization term \cite{33}. In this study, we explore the efficacy of sinkhorn distance \cite{36} in DEM SR, which is one of the forms of entropic optimal transport (EOT).\par 

 {Recently, Li \textit{et al.} \cite{gisr,gisr2} proposed DEM SR algorithms using a global Kriging interpolation based information supplement module and a CNN based local feature generation module. It results preferably as a SISR technique, but, in practical scenarios, it generates artifacts near boundary regions and are unable to reproduce the very fine ground truth (GT) details in the predicted SR. Hence, here we propose a} guided SR technique which is a key research area in computer vision, especially for depth estimation. One of the pioneering works in this domain is \cite{506}, where Kim \textit{et al.} proposes Deformable Kernel Networks (DKN) and Faster DKN (FDKN) which learn sparse and spatially invariant filter kernels. Later, He \textit{et al.} \cite{507} exerts a high-frequency guided module to embed the guide details in the depth map. Recently, Metzger \textit{et al.} \cite{508} has achieved baseline performance by adapting the concept of guided anisotropic diffusion with CNNs. Our proposed method aligns with such depth SR methods as we leverage important HR MX features to generate SR DEM. To address this promptly, we incorporate a graph-based attention due to their efficacy in representation learning for image restoration tasks \cite{graph4,graph5}. However, these works are extended versions of graph neural networks (GNNs) which suffer from over-smoothing problems. To resolve this, \cite{graph7,graph6} utilizes GNN based on filtering in the frequency domain. Despite its efficacy in different DL tasks, it is not properly explored for vision tasks. Hence, here we design our graph attention based on its selected frequencies.

\begin{figure}[tb]
    \centering
    \includegraphics[width=0.9\textwidth]{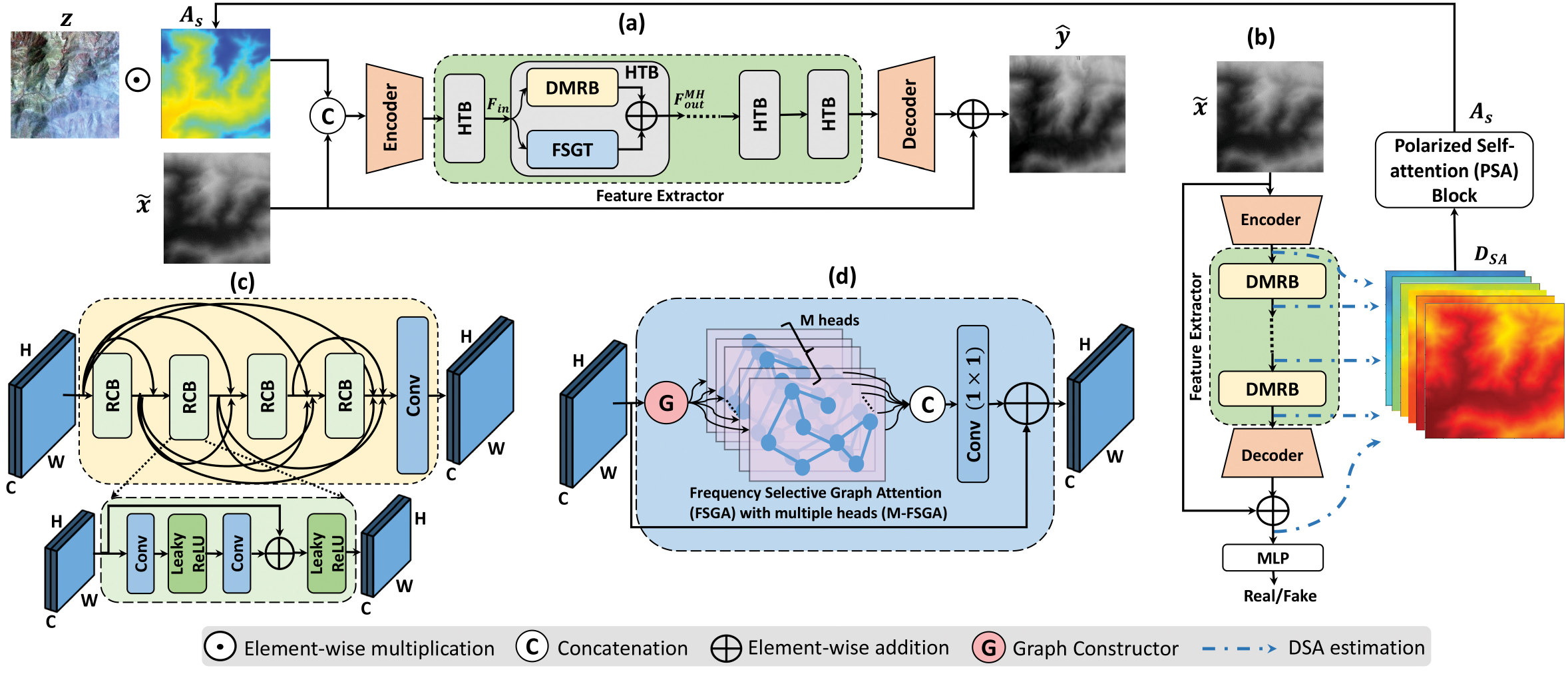}
    \caption{Overview of proposed framework. (a) The generator $G$ have multiple HTBs with parallelly connected (c) DMRB and (d) FSGT. Given guide $\mathbf{z}$ and upsampled LR DEM $\mathbf{\tilde{x}}$ to $G$, each HTB extracts global selective frequency information by FSGT and dense local features via DMRBs in latent space. (b) The discriminator $D$ consists of only DMRBs. Besides classifying predicted $\mathbf{\hat{y}}$ and GT $\mathbf{y}$ as real or fake, $D$ also estimates DSA $\mathbf{D_{SA}}$ with input $\mathbf{\tilde{x}}$. $\mathbf{D_{SA}}$ is passed through a PSA \cite{29} block to estimate $\mathbf{A_s}$ which acted as spatial attention for HR guide $\mathbf{z}$ during passing it to $G$ along $\mathbf{\tilde{x}}$.}
    \label{fig:fig2}
\end{figure}
\section{Methodology}
In Figure \ref{fig:fig2}, we have illustrated the architecture of our framework. The generator $G$ takes upsampled low-resolution (LR) DEM $\mathbf{\tilde{x}}$, and HR MX image guide $\mathbf{z}$, consisting FCC of NIR, red and green bands as input. Let $\mathbf{z} \sim \mathbb{P}_{Z}$, where $\mathbf{z} \in \mathbb{R}^{H \times W \times 3}$ with $\mathbb{P}_{Z}$ being the joint distribution of FCC composition and $ \mathbf{\tilde{x}}\sim \mathbb{P}_{\tilde{x}}$, where $\mathbb{P}_{\tilde{x}}$ constitute of upsampled LR DEM with $\mathbf{\tilde{x}} \in \mathbb{R}^{H \times W}$. Let $\mathbf{\hat{y}} \sim \mathbb{P}_{G_\theta}$ {be the predicted SR DEM} where $\mathbb{P}_{G_\theta}$ is the generator distribution parameterized by $\theta \in \Theta$, parameters of set of all possible generators. Let $\mathbf{y} \sim \mathbb{P}_{y}$ with $\mathbb{P}_{y}$ represents the target HR DEM distribution. The discriminator $D$ classifies $\mathbf{y}$ and $\mathbf{\hat{y}}$ as real or fake, and is assumed to be parameterized by $\psi \in \Psi$, parameters of a set of all possible discriminators. Our $D$ is also designed to estimate spatial attention $\mathbf{D_{SA}}$ from its latent space features with LR DEM $\tilde{x}$ as input as shown in Figure \ref{fig:fig2}. Since $\mathbf{D_{SA}}$ contains discriminative information of HR DEM, it acts as spatial attention for $\mathbf{z}$ allowing the model to focus on salient parts of it and avoid generating out-of-distribution (OOD) image information in the predicted SR DEM. To ensure this further, we process $\mathbf{D_{SA}}$ through a self-attention (SA) block PSA \cite{29} to remove redundant semantics, resulting in an enhanced representative attention map $\mathbf{A_s}$ as demonstrated in Figure \ref{fig:fig2}. Therefore, the predicted SR DEM ($\mathbf{\hat{y}}$) is estimated as $\mathbf{\hat{y}} = G(\mathbf{\tilde{x}},\mathbf{z}\odot \mathbf{A_s})$, where $\odot$ denotes element-wise multiplication.
\vspace{-0.2cm}
\subsection{Network Architecture}
As shown in Figure \ref{fig:fig2}, $G$ is designed based on a novel hybrid transformer block (HTB) {\cite{htb1,htb2} due to their effectiveness in capturing both long-distance as well as local relations in image restoration tasks}. Our HTB consists of a DMRB and a FSGT block. DMRB is developed based on ResNet and DenseNet by using both skip and dense connections. Each DMRB block is constituted of multiple densely connected Residual Convolution Blocks (RCBs). DMRB enables efficient context propagation and also stable gradient flow throughout the network while allowing local dense feature extraction. We introduce FSGT to leverage the extraction of global structural and positional relationships between spatially distant but semantically related regions. We use similar design for $D$. Both incorporate an encoder followed by a feature extractor and finally, a decoder. The feature extractor in $G$ consists of six HTBs while for $D$, it only consists of six DMRBs to extract dense discriminative latent space features, which are used as spatial attention to the HR MX guide. $D$ also adds a Multi-Layer Perceptron (MLP) layer to map its latent features into the required shape. We avoid using batch normalization as it degrades the performance and gives sub-optimal results for image SR \cite{27} tasks. Next, we discuss the functionality of FSGT and DSA.
\vspace{-.35cm}
\subsection{Frequency Selective Graph Transformer (FSGT) Module}\label{3.2}
To exploit high-frequency sharp details from HR guide and enhance latent feature representations, we propose a novel graph transformer, FSGT. As shown in Figure \ref{fig:graph}, for a given input $\mathbf{F_{in}} \in \mathbb{R}^{H \times W \times C}$, FSGT extracts $N$ patches using the patch generation method in W-MSA to construct the graph followed by a FSGA block for graph processing. A graph is represented as $\mathcal{G}= (\mathcal{V}, \mathcal{E})$ with nodes $\mathcal{V} = \{v_i | v_i \in \mathbb{R}^{hw \times c}, i = 1,..., N\}$, where $h$, $w$ and $c$ denotes height, width and channels for each patch represented as node and $\mathcal{E}$ is the set of all the edges connecting these nodes. The edge weights are defined by an adjacency matrix $\mathbf{\mathcal{A}}\in \mathbb{R}^{N \times N}$. The value of $N$ is decided by the shape of each patch $(h,w)$. \par
As shown in Figure \ref{fig:graph} (a), we build the graph connections by computing the similarities \cite{graph1} between the nodes after the linear transformation as $\mathcal{A}_{i,j} = \langle f_1(v_i), f_2(v_j) \rangle$, where $\langle \cdot,\cdot \rangle$ is the inner product, $v_i$ and $v_j$ are $i$-th and $j$-th node, and $f_1$ and $f_2$ corresponds to $1\times 1$ convolution. However, the generated graph $\mathcal{G}$ is dense connecting every node to every other node. Thus, low similarities between some nodes confuse the model on how close different nodes are in the graph. This redundant information will hamper the objective and quality of graph reconstruction. To tackle this, we design FSGA to focus on high-frequency features and also generate a sparse representative graph.\par
Figure \ref{fig:graph}(b) shows the detailed workflow of FSGA. Initially, the nodes $\mathcal{V}$ are flattened out and converted to a matrix $\mathbf{X}\in \mathbb{R}^{N\times hwc}$ as shown in Figure \ref{fig:graph}(a). It is later projected to query ($\mathbf{Q}$), key ($\mathbf{K}$) and value ($\mathbf{V}$) matrices with $\mathbf{Q} = \mathbf{X}\mathbf{W}_q$, $\mathbf{K} = \mathbf{X}\mathbf{W}_k$ and $\mathbf{V} = \mathbf{X}\mathbf{W}_v$, with $\mathbf{W}_q$, $\mathbf{W}_k$, and $\mathbf{W}_v$ being learnable projection weights. However, instead of using $\mathbf{K}$ directly, we filter out certain nodes in $\mathbf{X}$ based on graph Fourier transform (GFT) to generate filtered graph matrix as $\mathbf{\Bar{X}}$. From this the updated key matrix is computed as $\mathbf{\hat{K}} = \mathbf{\Bar{X}}\mathbf{W}_k$ which is used to get the attention as $\mathbf{A} = Softmax(\mathbf{Q}\mathbf{\hat{K}}^T)/\sqrt{d})$. \par
\begin{figure}[tb]
    \centering
    \includegraphics[width=.9\textwidth]{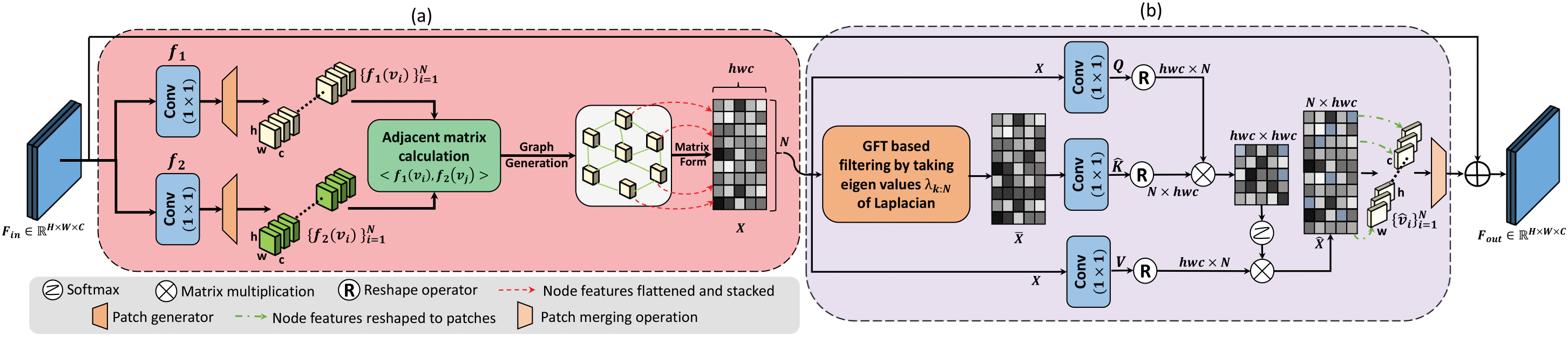}
    \caption{Workflow of FSGT, (a) graph construction mechanism, (b) FSGA block}
    \label{fig:graph}
    \vspace{-0.4cm}
\end{figure}

Graph signals can be analyzed in the frequency domain \cite{graph3} by using normalized Laplacian $\mathbf{\mathcal{L}} = \mathbf{I} - \mathbf{\mathcal{D}}^{-\frac{1}{2}}\mathbf{\mathcal{A}}\mathbf{\mathcal{D}}^{-\frac{1}{2}}$, where $\mathbf{I}$ is the identity matrix and $\mathbf{\mathcal{D}}$ is the diagonal matrix with $\mathbf{\mathcal{D}}_{ii} = \sum_{j}\mathbf{\mathcal{A}}_{ij}$. Taking the eigen-decomposition of $\mathbf{\mathcal{L}}$, we get: $\mathbf{\mathcal{L}} = \mathbf{\mathcal{P}}\mathbf{\Lambda}\mathbf{\mathcal{P}}^{-1}$, where $\mathbf{\mathcal{P}}$ is the eigen-vector matrix and $\mathbf{\Lambda} = diag([\lambda_1, \ldots, \lambda_N])$ is the diagonal eigen-value matrix with eigen values $\lambda_i \, \forall i\in \{1,\ldots, N\}$ ordered in a ascending order. Then, the GFT of $\mathbf{X}$ is defined as $\mathbf{\tilde{X}} = \mathcal{F}_g(\mathbf{X}) = \mathbf{\mathcal{P}}^T\mathbf{X}$, where $\mathbf{\mathcal{P}}\in \mathbb{R}^{hwc\times N}$ (for this section, we use tilde for frequency domain signal). Similarly, the inverse GFT (IGFT) is written as, $\mathbf{X} = \mathcal{F}_g^{-1}(\mathbf{\tilde{X}}) = \mathbf{\mathcal{P}}\mathbf{\tilde{X}}$. $\mathcal{F}_g(\cdot)$ and $\mathcal{F}_g^{-1}(\cdot)$ denotes GFT and IGFT operation. Hence in GFT, the time domain is graph space while the frequency domain is the eigen values $[\lambda_1, \ldots, \lambda_N]$ with each $\lambda_i$ being related to a particular frequency. To estimate the high-frequency, we consider only higher-order eigen values as $\lambda_1<\lambda_2\leq \ldots \ldots \leq \lambda_N$. It results in a sparse graph representation with significant frequency elements by blacking out low-weighted edges as they result in lower eigen values. Hence, we define a vector $\mathbf{\tilde{h}} = \begin{bmatrix}
    \mathbf{0} \: \:
    \mathbf{1}
\end{bmatrix}^T$ to act as a filter in frequency domain, where $\mathbf{0} = \{0\}^{k\times hwc}$ is all-zero matrix, $\mathbf{1} = \{1\}^{(N-k)\times hwc}$ is all-one matrix and $k$ is related to cut-off eigen value $\lambda_k$. The final filtered graph matrix is obtained as equation \ref{eq:graph1}.
\begin{equation}\label{eq:graph1}
    \mathbf{\Bar{X}} = \mathcal{F}_g^{-1} (\mathbf{\tilde{h}} \odot \mathcal{F}_g(\mathbf{X})) = \mathbf{\Bar{\mathcal{P}}}\mathbf{\Bar{\mathcal{P}}}^T\mathbf{X},
\end{equation}
where, $\mathbf{\Bar{\mathcal{P}}} = \mathbf{\mathcal{P}}_{:,k:N}$ are first $k$ eigen vectors. Hence, the node feature aggregation occurs by taking a sparse representative version of $\mathbf{\mathcal{A}}$. It also reduces the computational complexity of our attention module. As we are blacking out $k$ insignificant patches during key estimation, the effective complexity of our overall attention module is $\mathcal{O}((N-k)hwc)$ while it is $\mathcal{O}(Nh^2w^2c)$ for regular MSA. \par 
Using $\mathbf{\Bar{X}}$, we estimate the attention weights as $\mathbf{\hat{X}}$ as shown in Figure \ref{fig:graph} (b), from which the updated node feature patches are generated as $\hat{\mathcal{V}} = \{\hat{v}_i | \hat{v}_i \in \mathbb{R}^{hw \times c}\}$ by reshaping each node $\hat{v}_i$. The output of a FSGA is computed as $\mathbf{F_{out}} = \mathbf{F_{in}} + patch\_merger({\{\hat{v}\}}_{i=1}^N)$. For patch merging, we adapt the method used in W-MSA. We also employ muti-headed attention (M-FSGA) and to stabilize our training process, we dynamically select the value of $k \in \{\lfloor \frac{N}{2} \rfloor, \ldots, N-1\}$ for different heads to ensure not to miss out significant features at different frequencies. The outcomes of M-FSGA $(\{\mathbf{F_{out}^j}\}_{j=1}^{j=M})$ are passed through a Feed Forward Network (FFN) consisting of a concatenate and $1\times 1$ convolution block to aggregate them and project them to a desired shape as shown in Figure \ref{fig:fig2} (d). 

\subsection{Discriminator Spatial Attention (DSA)}\label{3.3}
The feature maps from the latent space of $D$ can be viewed as spatial attention to the HR guide $\mathbf{z}$. Since $D$ performs binary classification, apparently, it captures the discriminative features in latent space. \cite{28} introduced the concept of transferring these domain-specific latent features as attention to $G$. We use this similar notion to help $G$ focus on the salient parts of the HR guide while also helping to avoid the generation of redundant image features in SR DEM.\par
Therefore, besides classification, $D$ has another major functional branch, $D_{SA}$, to approximate spatial attention maps. For any input $\mathbf{m}$, $D_{SA}$ is used to estimate the normalized spatial feature maps, $D_{SA}: \mathbb{R}^{H \times W} \rightarrow[0,1]^{H \times W}$. Let $D$ consist of $t$ DMRBs and $a_i$ be the activation maps after $i^{th}$ DMRB with $c$ channels, such that $a_i \in \mathbb{R}^{H \times W \times c}$. We select $t$ different attention maps after $t$ DMRBs since at different depths, $D$ focuses on different features. Eventually, we calculate these attention coefficients according to \cite{28}, ${D}_{SA}(\mathbf{m}) =\sum_{i=1}^{t} \sum_{j=1}^{c}|a_{i j}(\mathbf{m})|$. \par
 To estimate the attention, we use upsampled LR DEM $\mathbf{\tilde{x}}$ as unlike image-to-image translation in \cite{28}, we do not have HR samples in the target domain during testing. Hence, we use domain adaptation loss from \cite{30} to estimate sharper latent features. The final attention maps $\mathbf{A_s}$ are derived by passing $D_{SA}$ through a PSA \cite{29} to exclude redundant features while highlighting key areas. It is chosen because of its ability to retain a high internal resolution compared to other SA modules. Next, we discuss the theoretical framework for optimizing our model.  
 \subsection{Theoretical framework}
 We train our model with SiRAN, a novel framework regularizing traditional GAN with Sinkhorn distance. Compared to WGAN and its variants which are designed to solve the Kantarovich formulation of OT problems to minimize the Wasserstein distance, SiRAN  showcases favourable sample complexity of $\mathcal{O}(n^{-1/2})$ \cite{34} (for WGAN, it is $\mathcal{O}(n^{-2/d})$ \cite{32}), given a sample size $n$ with a dimension $d$. This is because Sinkhorn is estimated based on entropic regularization. Another key issue with WGANs is the vanishing gradient problem near the optimal point resulting in a suboptimal solution. SiRAN avoids such scenarios, as it provides better convergence and tighter iteration complexity as we derive later.\par
 Let $\mu_{\theta} \in \mathbb{P}_{G_\theta}$ and $\nu \in \mathbb{P}_{y}$ be the measure of generated and true distribution with support included in a compact bounded set $\mathcal{X}, \mathcal{Y} \subset \mathbb{R}^{d}$, respectively. Therefore, the EOT \cite{505} between the said measures can be defined using Kantarovich formulation as shown in equation \ref{eq14} where we assume $\mathbf{\hat{y}} = {G}(\mathbf{\tilde{x}}, \mathbf{z}\odot A_s(\mathbf{\tilde{x}}))$.
\begin{equation}\label{eq14}
\mathcal{W}_{C,\varepsilon}(\mu_{\theta},\nu) = 
\inf_{\pi \in \Pi(\mu_{\theta},\nu)} \mathbb{E}_{\pi}[C(\mathbf{\hat{y}},\mathbf{y})] + \varepsilon I_{\pi}(\mathbf{\hat{y}},\mathbf{y}),
 \, I_{\pi}(\mathbf{\hat{y}},\mathbf{y})) = \mathbb{E}_{\pi}[\log(\frac{\pi(\mathbf{\hat{y}},\mathbf{y}))}{\mu_{\theta}(\mathbf{\hat{y}})\nu(\mathbf{y})}],
\end{equation}
where, $\Pi(\mu_{\theta},\nu)$ is the set of all joint distribution on $\mathcal{X} \times \mathcal{Y}$ with marginals $\mu_{\theta}$ and $\nu$, $C: \mathcal{X} \times \mathcal{Y} \rightarrow \mathbb{R}$ is the cost of transferring unit mass between locations $\mathbf{\hat{y}} \in \mathcal{X}$ and $\mathbf{y} \in \mathcal{Y}$, and the regularization $I_{\pi}(\cdot)$ is the mutual information between two measures \cite{35} with $\varepsilon$ as its weight. When $C(\cdot)$ is distance-metric the solution of equation \ref{eq14} is referred to entropic Wasserstein distance between two probability measures. To fit $\mu_\theta$ to $\nu$, $\mathcal{W}_{C,\varepsilon}(\mu_{\theta},\nu)$ is to be minimized which can be treated as loss function for $G$ \cite{31}. However, it has one major issue of being strictly larger than zero, i.e. $\mathcal{W}_{C,\varepsilon}(\nu,\nu) \neq 0$ which is resolved by adding normalizing terms to equation \ref{eq14} leading to the Sinkhorn loss \cite{36} as defined below.
\begin{equation}\label{eq17}
\begin{aligned}
\mathcal{S}_{C,\varepsilon} = \mathcal{W}_{C,\varepsilon}(\mu_\theta, \nu) - \frac{1}{2} \mathcal{W}_{C,\varepsilon}(\mu_\theta, \mu_\theta) - \frac{1}{2} \mathcal{W}_{C,\varepsilon}(\nu, \nu).
\end{aligned}
\end{equation}
Based on the value of $\varepsilon$, equation \ref{eq17} shows asymptotic behaviour \cite{36}. When $\varepsilon\rightarrow0$, it recovers the conventional OT problem, while $\varepsilon\rightarrow \infty$, it converges to maximum mean discrepancy (MMD). Therefore, the Sinkhorn loss interpolates between OT loss and MMD loss as $\varepsilon$ varies from $0$ to $\infty$ leveraging the concurrent advantage of non-flat geometric properties of OT loss and, high dimensional rigidity and energy distance properties of MMD loss (when $C=||\cdot||_p$ with $1<p<2$). Apart from this, the selection of $\varepsilon$ also affects the overall gradients of $G$, which eventually results in preventing vanishing gradient problems near the optimal point. This can be established from the smoothness property of $\mathcal{S}_{C,\varepsilon}(\mu_\theta, \nu)$ with respect to $\theta$. In this context, we propose \textbf{Theorem 1}, where we derive a formulation to estimate the smoothness of Sinkhorn loss.
\begin{thm}[Smoothness of Sinkhorn loss]
\textit{Consider $\mathcal{S}_{C,\varepsilon}(\mu_\theta, \nu)$ be the Sin-khorn loss between measures $\mu_\theta$ and $\nu$ on $\mathcal{X}$ and $\mathcal{Y}$, two bounded subsets of $\mathbb{R}^{d}$, with a $\mathcal{C}^{\infty}$, $L_0$-Lipschitz, and $L_1$-smooth cost function $C$. Then, for $(\theta_1, \theta_2)\in \Theta$,}
\begin{equation}\label{eq18}
\begin{aligned}
\mathbb{E}||\nabla_\theta \mathcal{S}_{C,\varepsilon}(\mu_{\theta_1}, \nu) - \nabla_\theta \mathcal{S}_{C,\varepsilon}(\mu_{\theta_2}, \nu)||
= \mathcal{O}(L(L_1 + \frac{2L_0^2L}{\varepsilon(1+B e^{\frac{\kappa}{\varepsilon}})})) ||\theta_1 - \theta_2||, 
\end{aligned}
\end{equation}
\textit{where $L$ is the Lipschitz in $\theta$, $\kappa = 2(L_0 |\mathcal{X}| + ||C||_{\infty})$}, $B = d.\max (||m||, ||M||)$ with $m$ amd $M$ being the minimum and maximum in set $\mathcal{X}$. Let $\Gamma_{\varepsilon}$ be the smoothness mentioned above, then we get the following asymptotic behavior in $\varepsilon$: 

\quad \quad 1. \textit{as $\varepsilon \rightarrow 0$, $\Gamma_{\varepsilon} \rightarrow \mathcal{O}(\frac{2 L_0^2 L^2}{B \varepsilon e^{\frac{\kappa}{\varepsilon}}})$,} \quad \textit{and,} \quad 2. \textit{as $\varepsilon \rightarrow \infty$, $\Gamma_{\varepsilon} \rightarrow \mathcal{O}(L L_1)$}.
\vspace{-0.25cm}
\end{thm}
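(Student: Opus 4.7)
The plan is to compute $\nabla_\theta \mathcal{S}_{C,\varepsilon}(\mu_\theta,\nu)$ in closed form through the Sinkhorn dual potentials, then bound its increment at $\theta_1$ vs $\theta_2$ by splitting contributions into (i) a part coming from differentiating the cost $C$ through the generator, which will produce the $LL_1$ term, and (ii) a part coming from perturbations of the optimal potentials when the source measure varies, which will produce the $\varepsilon$-dependent term. To start, I would rewrite $\mathcal{W}_{C,\varepsilon}(\mu_\theta,\nu) = \int f^{*}_\theta\,d\mu_\theta + \int g^{*}_\theta\,d\nu$ where $(f^{*}_\theta, g^{*}_\theta)$ are the Sinkhorn potentials solving the soft $c$-transform fixed-point system, and push $\mu_\theta$ back to the latent law of $(\tilde{x},z)$ via $\hat y = G_\theta(\tilde x, z\odot A_s)$. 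The envelope (Danskin) theorem then eliminates the implicit $\theta$-dependence of $f^{*}_\theta$ at the supremum and yields
\begin{equation*}
\nabla_\theta \mathcal{W}_{C,\varepsilon}(\mu_\theta,\nu) \;=\; \mathbb{E}\bigl[\nabla_\theta G_\theta^\top\, \nabla_{\hat y} f^{*}_\theta(\hat y)\bigr].
\end{equation*}
The self term $\mathcal{W}_{C,\varepsilon}(\mu_\theta,\mu_\theta)$ is treated analogously with the symmetric Sinkhorn potential $h^{*}_\theta$, the factor $2$ from its two marginals cancelling the $\tfrac12$ in $\mathcal{S}_{C,\varepsilon}$; the term $\mathcal{W}_{C,\varepsilon}(\nu,\nu)$ is $\theta$-independent and contributes nothing.

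Next, I would apply the triangle inequality to decompose $\nabla_\theta \mathcal{S}_{C,\varepsilon}(\mu_{\theta_1},\nu) - \nabla_\theta \mathcal{S}_{C,\varepsilon}(\mu_{\theta_2},\nu)$ into a \emph{generator drift} $\bigl\|\mathbb{E}[(\nabla_\theta G_{\theta_1} - \nabla_\theta G_{\theta_2})\,\nabla \phi^{*}_{\theta_2}]\bigr\|$ and a \emph{potential drift} $\bigl\|\mathbb{E}[\nabla_\theta G_{\theta_1}\bigl(\nabla \phi^{*}_{\theta_1} - \nabla \phi^{*}_{\theta_2}\bigr)]\bigr\|$, where $\phi^{*}$ is the appropriate combination of $f^{*}_\theta$ and $h^{*}_\theta$. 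The generator drift is controlled by the $L$-Lipschitzness of $\theta\mapsto G_\theta$ together with the $L_1$-smoothness of $C$ (the potentials inherit first-order regularity from the cost through their fixed-point equation), and therefore contributes $\mathcal{O}(L L_1)\|\theta_1-\theta_2\|$.

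The potential drift is where the main difficulty lies. I would establish a quantitative stability estimate of the form
\begin{equation*}
\bigl\|\nabla f^{*}_{\theta_1} - \nabla f^{*}_{\theta_2}\bigr\|_\infty \;\le\; \frac{2 L_0^2\, L}{\varepsilon\,\bigl(1 + B\,e^{\kappa/\varepsilon}\bigr)} \,\|\theta_1-\theta_2\|,
\end{equation*}
proved from the Birkhoff--Hilbert contraction of the Sinkhorn operator on the positive cone of densities, whose contraction rate is $(e^{\kappa/\varepsilon}-1)/(e^{\kappa/\varepsilon}+1)$ with $\kappa = 2\bigl(L_0|\mathcal{X}| + \|C\|_\infty\bigr)$ the oscillation of the cost on the compact support. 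An additional $L_0/\varepsilon$ factor is picked up when differentiating the log-sum-exp soft-min with respect to its argument, and $B = d\cdot\max(\|m\|,\|M\|)$ enters as the effective support radius when converting the contraction of the potentials into a stability bound on $\nabla f^{*}$. Extracting the precise denominator $1 + B e^{\kappa/\varepsilon}$, as opposed to a generic uniform smoothness bound, is the crux: the whole asymptotic behaviour of $\Gamma_\varepsilon$ in $\varepsilon$ hinges on it, and it requires careful accounting of the oscillation of $C$ within the Hilbert-metric contraction argument rather than a coarse diameter estimate.

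Finally, the two limits follow by inspection of the combined bound $\Gamma_\varepsilon = \mathcal{O}\bigl(L\bigl(L_1 + \tfrac{2 L_0^2 L}{\varepsilon(1+B e^{\kappa/\varepsilon})}\bigr)\bigr)$. As $\varepsilon\to\infty$, $B e^{\kappa/\varepsilon}\to B$ remains bounded while the $1/\varepsilon$ prefactor kills the second summand, leaving $\Gamma_\varepsilon \to \mathcal{O}(L L_1)$. As $\varepsilon\to 0$, $e^{\kappa/\varepsilon}$ blows up, the $L_1$ piece becomes negligible, and the bound collapses to $\mathcal{O}\bigl(2 L_0^2 L^2 / (B\,\varepsilon\, e^{\kappa/\varepsilon})\bigr)$, recovering precisely the two stated asymptotic rates.
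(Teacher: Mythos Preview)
Your overall strategy is sound and the two asymptotic limits are handled correctly, but the route you take to the $\varepsilon$-dependent term is genuinely different from the paper's, and the key step you flag as ``the crux'' is not the one the paper uses.

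The paper works in the \emph{primal} picture throughout. After the Danskin step it writes $\nabla_\theta \mathcal{W}_{C,\varepsilon}(\mu_\theta,\nu)=\mathbb{E}_{\pi^*(\theta)}[\nabla_\theta C(G_\theta(\mathbf{x}),\mathbf{y})]$ and then splits the increment as
\[
\|\mathbb{E}_{\pi_1^*}[\nabla_\theta C_1]-\mathbb{E}_{\pi_2^*}[\nabla_\theta C_2]\|\le L_0L\,\|\pi_1^*-\pi_2^*\|+L_1L\,\|\theta_1-\theta_2\|,
\]
so the whole analysis reduces to bounding $\|\pi_1^*-\pi_2^*\|$ directly. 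This is done by inserting the closed form $\pi^*=\mu_\theta\,\nu\,\exp\{(\phi^*+\psi^*-C)/\varepsilon\}$, using a lemma that the potentials are $L_0$-Lipschitz (inherited from $C$), bounding the exponential by $e^{\kappa/\varepsilon}$ with $\kappa=2(L_0|\mathcal X|+\|C\|_\infty)$, and then passing through the marginal constraint $\int_{\mathcal X}\pi^*\,d\mathbf{x}=\nu$ together with the crude volume estimate $\int_{\mathcal X}|d\mathbf{x}|\le d\max(\|m\|,\|M\|)=B$. This self-referential inequality in $K=\|\pi_1^*-\pi_2^*\|$ is what produces the denominator $1+Be^{\kappa/\varepsilon}$; no Hilbert-metric machinery is invoked.

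Your proposal instead stays in the \emph{dual} picture, bounding $\|\nabla f^*_{\theta_1}-\nabla f^*_{\theta_2}\|_\infty$ via Birkhoff--Hilbert contraction of the Sinkhorn operator. That is a legitimate and arguably cleaner stability framework, but the standard Hilbert contraction rate is $\tanh(\kappa/4\varepsilon)$ or $(e^{\kappa/\varepsilon}-1)/(e^{\kappa/\varepsilon}+1)$, and it is not clear how you would extract precisely $1+Be^{\kappa/\varepsilon}$ from it; the constant $B$ in the paper arises from a discrete volume bound on $\mathcal X$, not from the contraction coefficient. So while your high-level decomposition (generator drift $\leadsto LL_1$, potential drift $\leadsto$ $\varepsilon$-term) matches the paper's, the mechanism you propose for the second piece is different, more sophisticated, and would most likely deliver a bound of a slightly different functional form rather than the exact denominator stated.
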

\begin{proof}
    Refer to Appendix B in supplementary (supp.).
    \vspace{-0.2cm}
\end{proof}
\textbf{Theorem 1} shows the variation of smoothness of $\mathcal{S}_{C,\varepsilon}(\mu_\theta, \nu)$ with respect to $\varepsilon$. Using this, we can estimate the upper bound of the overall expected gradient of our proposed adversarial set-up. Hence, to formulate this upper bound, we present \textbf{Proposition 1}. Here, we assume $\mathbf{x}=concat(\mathbf{\tilde{x}}, \mathbf{z}\odot A_s(\mathbf{\tilde{x}}))$.
\vspace{-0.2cm}
\begin{proposition}
\textit{Let $l(\cdot)$, $g(\cdot)$ and $\mathcal{S}_{C,\varepsilon}(\cdot)$ be the objective functions related to supervised losses, adversarial loss and Sinkhorn loss with smoothness $\Gamma_{\varepsilon}$, and $\theta^{*}$ and $\psi^{*}$ be the parameters of optimal $G$ and $D$. Let us suppose $l(\mathbf{\hat{y}},\mathbf{y})$, where $\mathbf{\hat{y}}=G_{\theta}(\mathbf{x})$ is $\beta$-smooth in $\mathbf{\hat{y}}$ for some input $\mathbf{x}$. If $||\theta - \theta^{*}||\leq \epsilon$ and $||\psi - \psi^{*}||\leq \delta$, then $||\nabla_\theta \mathbb{E}_{(x,y) \sim \mathcal{X} \times \mathcal{Y}}[l(\mathbf{\hat{y}},\mathbf{y}) + \mathcal{S}_{C,\varepsilon}(\mu_\theta(\mathbf{\hat{y}}), \nu(\mathbf{y})) - g(\psi; \mathbf{\hat{y}})]||\leq L^2 \epsilon(\beta + \Gamma_{\varepsilon}) + L\delta$.}
\end{proposition}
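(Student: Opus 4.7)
The plan is to decompose the composite gradient into three separate pieces, each evaluated relative to the optimal parameters $(\theta^{*},\psi^{*})$, where first-order optimality annihilates the corresponding gradient. Concretely, using linearity I would write
\begin{equation*}
\nabla_\theta \mathbb{E}[l + \mathcal{S}_{C,\varepsilon} - g] \;=\; \nabla_\theta \mathbb{E}[l] + \nabla_\theta \mathbb{E}[\mathcal{S}_{C,\varepsilon}] - \nabla_\theta \mathbb{E}[g],
\end{equation*}
add and subtract the corresponding gradient at $(\theta^{*},\psi^{*})$ in each piece, and invoke the triangle inequality. Since the three gradients vanish at the saddle point, each remaining difference is a Lipschitz-in-parameters deviation that can be estimated from the assumed smoothness properties and the proximity bounds $\|\theta-\theta^{*}\|\leq\epsilon$ and $\|\psi-\psi^{*}\|\leq\delta$.

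For the supervised term I would apply the chain rule $\nabla_\theta l(G_\theta(\mathbf{x}),\mathbf{y})=(\nabla_\theta G_\theta)^{\top}\nabla_{\hat{\mathbf{y}}} l(\hat{\mathbf{y}},\mathbf{y})$. The $L$-Lipschitz dependence of $G_\theta$ on $\theta$ gives $\|\hat{\mathbf{y}}-\hat{\mathbf{y}}^{*}\|\leq L\epsilon$, and the $\beta$-smoothness of $l$ in $\hat{\mathbf{y}}$ then bounds $\|\nabla_{\hat{\mathbf{y}}} l(\hat{\mathbf{y}},\mathbf{y})-\nabla_{\hat{\mathbf{y}}} l(\hat{\mathbf{y}}^{*},\mathbf{y})\|\leq \beta L\epsilon$. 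Multiplying by the Jacobian bound $\|\nabla_\theta G_\theta\|\leq L$ yields a contribution of order $\beta L^{2}\epsilon$. For the Sinkhorn term I would invoke Theorem~1 directly: the smoothness constant $\Gamma_\varepsilon$ of $\mathcal{S}_{C,\varepsilon}(\mu_\theta,\nu)$ in $\theta$ bounds the gradient deviation by $\Gamma_\varepsilon\|\theta-\theta^{*}\|$, and after tracking the same $G_\theta$ chain-rule factors that appeared in the supervised case one recovers the $L^{2}\Gamma_\varepsilon\epsilon$ contribution. Collecting these two pieces produces the $L^{2}\epsilon(\beta+\Gamma_\varepsilon)$ portion of the bound.

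For the adversarial term, which depends on both $\theta$ through $\hat{\mathbf{y}}=G_\theta(\mathbf{x})$ and on $\psi$ explicitly, I would insert the intermediate point $\nabla_\theta\mathbb{E}[g(\psi^{*};G_\theta(\mathbf{x}))]$ and split into a $\theta$-difference and a $\psi$-difference. The $\theta$-difference vanishes by the saddle-point optimality of the generator at $\theta^{*}$, while the $\psi$-difference is controlled by the Lipschitz dependence of $\nabla_\theta g$ on $\psi$, yielding the $L\delta$ term after one factor of $L$ from $\nabla_\theta G_\theta$. Summing the three contributions via the triangle inequality then delivers the stated bound.

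The main obstacle I anticipate is the bookkeeping for the composition through $G_\theta$: one must be careful that $\Gamma_\varepsilon$ from Theorem~1 is applied to the $\theta$-parametrized measure $\mu_\theta$ rather than to a raw sample, and that the saddle-point optimality of $(\theta^{*},\psi^{*})$ is used at the right juncture so that each term reduces to a clean smoothness-in-parameters deviation. A secondary subtlety is ensuring the bound is uniform under the expectation over $(\mathbf{x},\mathbf{y})$, which requires either a dominated-convergence argument or an assumption that the constants $L$, $\beta$ hold almost surely on the support.
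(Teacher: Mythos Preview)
Your proposal matches the paper's proof essentially line for line: decompose by linearity, bound each piece via the chain rule through $G_\theta$, smoothness in the output variable, and vanishing of the gradient at the optimum, then sum by the triangle inequality. On your flagged bookkeeping concern, the paper resolves it by taking $\Gamma_\varepsilon$ in the Proposition to be the smoothness of $\mathcal{S}_{C,\varepsilon}$ in the \emph{output} $p=G_\theta(\mathbf{x})$ (i.e.\ the Theorem~1 constant with the outer $L$-factors stripped), so that the chain rule supplies the $L^{2}$ prefactor exactly as for the supervised term; and for the adversarial piece the relevant optimality is that of the \emph{discriminator} ($\nabla_{\hat{\mathbf{y}}}g(\psi^{*};\hat{\mathbf{y}})=0$), not the generator's saddle-point condition.
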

\begin{proof}
    Refer to Appendix C in supp.
    \vspace{-0.2cm}
\end{proof}
In GAN setups as mentioned in \cite{501}, $\epsilon \rightarrow 0$ leads to a vanishing gradient near the optimal region due to reductions in $\delta$. However, regularizing with Sinkhorn introduces an upper bound dependent on $\Gamma_{\varepsilon}$, which varies exponentially with $\varepsilon$ (see \textbf{Proposition 1}). Choosing an appropriate $\varepsilon$ mitigates the vanishing gradient and enhances performance. Additionally, Sinkhorn regularization improves iteration complexity \cite{501}, resulting in faster convergence as established in \textbf{Proposition 2}..
\vspace{-0.2cm}
\begin{proposition}
Suppose the supervised loss $l(\theta)$ is lower bounded by $l^{*}>\infty$ and it is twice differentiable. For some arbitrarily small $\zeta>0$, $\eta>0$ and $\epsilon_1>0$, let $||\nabla g(\psi; \mathbf{\hat{y}})||\geq \zeta$, $||\nabla\mathcal{S}_{C,\varepsilon}(\mu_\theta, \nu)||\geq \eta$ and $||\nabla l(\mathbf{\hat{y}},\mathbf{y})||\geq \epsilon_1$, with $\delta\leq \frac{\sqrt{2\epsilon_1 \zeta}}{L}$, and $\Gamma_{\varepsilon}<\frac{\sqrt{2\epsilon_1 \eta}}{L^2\epsilon}$, then the iteration complexity in Sinkhorn regularization is upper bounded by $\mathcal{O}(\frac{(l(\theta_0) - l^*)\beta_1}{\epsilon_1^2+2\epsilon_1(\zeta+\eta) - L^2(\delta^2+L^2\Gamma_{\varepsilon}^2\epsilon^2)})$, assuming $||\nabla^2l(\theta)||\leq \beta_1$.
\vspace{-0.2cm}
\end{proposition}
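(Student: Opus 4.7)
The plan is to apply the standard descent lemma for non-convex smooth optimization to the supervised objective $l(\theta)$ while gradient descent is actually performed on the full composite loss $F(\theta) := \mathbb{E}[l(\hat{y},y) + \mathcal{S}_{C,\varepsilon}(\mu_\theta,\nu) - g(\psi;\hat{y})]$, and then to telescope across iterations against the lower bound $l^{*}$. Since $\|\nabla^{2} l(\theta)\|\le \beta_1$, $l$ is $\beta_1$-smooth in $\theta$; with the update $\theta_{t+1}=\theta_t-\alpha\nabla F(\theta_t)$ at step size $\alpha=1/\beta_1$, the descent inequality yields
\begin{equation*}
l(\theta_{t+1}) \le l(\theta_t) - \frac{1}{\beta_1}\langle \nabla l(\theta_t), \nabla F(\theta_t)\rangle + \frac{1}{2\beta_1}\|\nabla F(\theta_t)\|^{2}.
\end{equation*}

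The core technical step is to manipulate the right-hand side. I would decompose $\nabla F = \nabla l + \nabla \mathcal{S}_{C,\varepsilon} - \nabla g$ and expand both the inner product and the squared norm. The inner-product pieces $\langle \nabla l, \nabla \mathcal{S}_{C,\varepsilon}\rangle$ and $-\langle \nabla l, \nabla g\rangle$ are lower bounded using the hypothesized gradient magnitudes $\|\nabla l\|\ge \epsilon_1$, $\|\nabla \mathcal{S}_{C,\varepsilon}\|\ge \eta$, $\|\nabla g\|\ge \zeta$ together with the usual alignment in the adversarial setup (the generator's supervised direction is positively correlated with $\nabla\mathcal{S}_{C,\varepsilon}$ and anti-correlated with $\nabla g$), yielding the positive contribution $\epsilon_1^{2} + 2\epsilon_1(\zeta+\eta)$. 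The squared-norm term carries the unfavorable ``perturbation'' contributions from the distance of $\psi$ to $\psi^{*}$ and from the Sinkhorn smoothness; these are controlled by Proposition~1 and its $L$-Lipschitz building blocks, bounded respectively by $L^{2}\delta^{2}$ and $L^{4}\Gamma_{\varepsilon}^{2}\epsilon^{2}$, producing the subtraction $L^{2}(\delta^{2}+L^{2}\Gamma_{\varepsilon}^{2}\epsilon^{2})$. Combining, the per-iteration decrease satisfies
\begin{equation*}
l(\theta_t)-l(\theta_{t+1}) \;\gtrsim\; \frac{1}{\beta_1}\Big[\epsilon_1^{2} + 2\epsilon_1(\zeta+\eta) - L^{2}\big(\delta^{2} + L^{2}\Gamma_{\varepsilon}^{2}\epsilon^{2}\big)\Big].
\end{equation*}
The hypotheses $\delta\le \sqrt{2\epsilon_1\zeta}/L$ and $\Gamma_{\varepsilon}<\sqrt{2\epsilon_1\eta}/(L^{2}\epsilon)$ are exactly what enforce $L^{2}\delta^{2}\le 2\epsilon_1\zeta$ and $L^{4}\Gamma_{\varepsilon}^{2}\epsilon^{2}\le 2\epsilon_1\eta$, guaranteeing this bracket is strictly positive.

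Finally, I would telescope over $T$ iterations using $l(\theta_0)-l^{*}\ge l(\theta_0)-l(\theta_T)\ge T\cdot(\text{per-iter decrease})$ and rearrange, arriving at
\begin{equation*}
T \;\le\; \mathcal{O}\!\left(\frac{(l(\theta_0)-l^{*})\,\beta_1}{\epsilon_1^{2} + 2\epsilon_1(\zeta+\eta) - L^{2}(\delta^{2} + L^{2}\Gamma_{\varepsilon}^{2}\epsilon^{2})}\right),
\end{equation*}
which is the claimed bound. The hardest part will be the middle paragraph: producing precisely the stated combination of terms requires distributing the cross-products $\langle \nabla l,\nabla \mathcal{S}_{C,\varepsilon}\rangle$ and $\langle \nabla l,\nabla g\rangle$ into alignment-based lower bounds with the correct signs, and reading off the perturbation contributions from Proposition~1 in the \emph{split} form $L^{2}\delta^{2}+L^{4}\Gamma_{\varepsilon}^{2}\epsilon^{2}$ rather than the looser single square $(L\delta+L^{2}\epsilon\Gamma_{\varepsilon})^{2}$. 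The implicit local-alignment assumptions on the adversarial/regularization gradients must either be invoked from the smooth-GAN literature referenced as \cite{501} or absorbed into the constants hidden in the $\mathcal{O}(\cdot)$.
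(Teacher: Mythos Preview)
Your proposal is correct and follows essentially the same route as the paper: second-order Taylor expansion (descent lemma) for $l$ along the composite-gradient step with $h_t=1/\beta_1$, split $\nabla F=\nabla l+\nabla\mathcal{S}_{C,\varepsilon}-\nabla g$, harvest the $\epsilon_1^2+2\epsilon_1(\zeta+\eta)$ from the inner-product terms via the gradient lower bounds, upper bound the remaining squared terms by $L^2\delta^2$ and $L^4\Gamma_\varepsilon^2\epsilon^2$ using Proposition~1, then telescope against $l^*$. The paper handles the alignment/sign issue you flag just as loosely---it simply invokes ``triangle inequality and Cauchy--Schwarz'' and then ``Minkowski'' to obtain the separated squares---so your caution about absorbing this into the constants or the \cite{501} framework is well placed and matches the level of rigor in the original.
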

\begin{proof}
    Refer to Appendix D in supp.
    \vspace{-0.2cm}
\end{proof}
\begin{corollary}
    Using first order Taylor series, the upper bound in \textbf{Proposition 2} becomes $\mathcal{O}(\frac{l(\theta_0) - l^*}{\epsilon_1^2+\epsilon_1(\zeta+\eta)})$.
    \vspace{-0.2cm}
\end{corollary}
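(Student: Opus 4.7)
The plan is to revisit the descent analysis underlying \textbf{Proposition 2} and observe that the factor $\beta_1$ in the numerator together with the subtractive term $-L^2(\delta^2+L^2\Gamma_\varepsilon^2\epsilon^2)$ in the denominator both originate from the \emph{second-order} remainder in the Taylor expansion of $l(\theta)$ along a parameter update. Retaining only the first-order term, as the statement of the Corollary suggests, will collapse both of these contributions and leave the simpler expression.

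Concretely, I would carry out the following steps. First, start from the per-step expansion used in the proof of \textbf{Proposition 2},
\begin{equation*}
l(\theta_{k+1}) = l(\theta_k) + \nabla l(\theta_k)^T (\theta_{k+1}-\theta_k) + \tfrac{1}{2}(\theta_{k+1}-\theta_k)^T \nabla^2 l(\tilde{\theta})(\theta_{k+1}-\theta_k),
\end{equation*}
where the gradient-descent update $\theta_{k+1}-\theta_k$ is built from the combined directions $\nabla l$, $\nabla g$, and $\nabla\mathcal{S}_{C,\varepsilon}$. Second, invoke the first-order Taylor truncation $l(\theta_{k+1}) \approx l(\theta_k) + \nabla l(\theta_k)^T(\theta_{k+1}-\theta_k)$, thereby dropping the quadratic remainder. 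This eliminates the $\tfrac{\beta_1}{2}\|\theta_{k+1}-\theta_k\|^2$ contribution which, in the original proof, expanded (after substituting the composite update and applying the hypotheses $\delta\leq\sqrt{2\epsilon_1\zeta}/L$ and $\Gamma_\varepsilon<\sqrt{2\epsilon_1\eta}/(L^2\epsilon)$) into the subtractive pieces $-L^2\delta^2$ and $-L^4\Gamma_\varepsilon^2\epsilon^2$ in the denominator, while simultaneously stripping the overall $\beta_1$ from the numerator. Third, lower-bound the surviving inner product using $\|\nabla l\|\geq \epsilon_1$, $\|\nabla g\|\geq \zeta$, and $\|\nabla \mathcal{S}_{C,\varepsilon}\|\geq \eta$ to obtain a per-iteration decrement of order $\epsilon_1^2+\epsilon_1(\zeta+\eta)$; the factor of $2$ present in the denominator of Proposition 2 traces back to the cross terms in the discarded square $\|\nabla l+\nabla g+\nabla\mathcal{S}_{C,\varepsilon}\|^2$, so it collapses to $1$ once only the linear term is kept. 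Finally, telescope $l(\theta_0)-l^{\ast}\geq \sum_{k=0}^{N-1}\bigl(l(\theta_k)-l(\theta_{k+1})\bigr)$ across iterations and rearrange, absorbing the step-size and remaining multiplicative constants into $\mathcal{O}(\cdot)$, to obtain the advertised iteration complexity.

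The main obstacle will be justifying the first-order truncation rigorously, since Taylor's theorem strictly yields an $\mathcal{O}(\|\theta_{k+1}-\theta_k\|^2)$ remainder of precisely the form being discarded. I would address this by invoking the same hypotheses of \textbf{Proposition 2} on $\delta$ and $\Gamma_\varepsilon$: they guarantee that $L^2(\delta^2+L^2\Gamma_\varepsilon^2\epsilon^2)$ is dominated by $2\epsilon_1(\zeta+\eta)$, so that omitting it alters only the hidden constant in $\mathcal{O}(\cdot)$ and not the order of the bound. Combined with absorbing the Hessian bound $\beta_1$ into the implicit constant of $\mathcal{O}(\cdot)$, this yields the simplified expression stated in the Corollary.
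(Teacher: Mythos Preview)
Your proposal is correct and follows essentially the same route as the paper: drop the second-order remainder in the per-step Taylor expansion, lower-bound the first-order inner product via $\|\nabla l\|\geq\epsilon_1$, $\|\nabla g\|\geq\zeta$, $\|\nabla\mathcal{S}_{C,\varepsilon}\|\geq\eta$ to get a decrement $h_t(\epsilon_1^2+\epsilon_1(\zeta+\eta))$, then telescope and absorb $h_t$ into the $\mathcal{O}(\cdot)$. Two small remarks: (i) $\beta_1$ does not need to be ``absorbed'' --- it simply never appears once the quadratic term is gone, since in the paper's argument $\beta_1$ enters only through $\|\nabla^2 l\|\leq\beta_1$; and (ii) your explanation of why the factor $2$ disappears is slightly off --- in the paper's Proposition~2 the $2$ in front of $\epsilon_1(\zeta+\eta)$ arises because the $\epsilon_1^2$ term was halved by cancellation with the second-order contribution $\tfrac{h_t}{2}\|\nabla l\|^2$, so factoring out $1/2$ doubles the cross terms; without the second order that halving never happens.
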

\begin{proof}
    Refer to Appendix D.1 in supp.
\end{proof}
When $\Gamma_{\varepsilon}<\frac{\sqrt{2\epsilon_1 \eta}}{L^2\epsilon}$, the denominator of the derived upper bound in \textbf{Proposition 2} is greater than the same in Theorem 3 of \cite{501}. This is true for almost all valid $\varepsilon$ as we experimentally verify in Appendix E in supp. Therefore, SiRAN has tighter iteration complexity compared to the regular GAN set-ups. \textbf{Corollary 1} also verifies this using a simpler setup, as it increases the convergence rate from $\mathcal{O}((\epsilon_1^2 + \epsilon_1 \zeta)^{-1})$ \cite{501} to $\mathcal{O}((\epsilon_1^2 + \epsilon_1(\zeta+\eta))^{-1})$. Due to these advantages, we regularize the generator loss with Sinkhorn distance as defined below,
\begin{equation}
\mathscr{L}_{OT}= \mathbb{E}_{\mathbf{\tilde{x}} \sim \mathbb{P}_{\tilde{x}}, \mathbf{z} \sim \mathbb{P}_{Z}, \mathbf{y} \sim \mathbb{P}_{y}}\mathcal{S}_{C,\varepsilon}(\mu(\mathbf{\hat{y}}), \nu(\mathbf{y})),
\end{equation}
where $\mu$ and $\nu$ is the measure of generated and true distributions. $\mathscr{L}_{OT}$ is estimated according to \cite{36} which utilizes  $\varepsilon$ and the Sinkhorn iterations $T$ as the major parameters. As Sinkhorn loss also minimizes the Wasserstein distance, it serves the purpose of WGAN to resolve the issues of the original GAN more effectively. Hence, we use original GAN objective function $(\mathscr{L}_{ADV})$ while regularized with Sinkhorn loss. We also regularize the objective function of $G$ with pixel loss $(\mathscr{L}_{P})$ and SSIM loss $(\mathscr{L}_{SSIM})$ to generate samples close to GT in terms of minimizing the pixel-wise differences while preserving the perceptual quality and structural information. Therefore, the overall generator loss is defined as
\begin{equation}
\lambda_P\mathscr{L}_{P} + \lambda_{SSIM}\mathscr{L}_{SSIM} + \lambda_{ADV}\mathscr{L}_{ADV} + \lambda_{OT}\mathscr{L}_{OT},
\end{equation}
where $\lambda_P$, $\lambda_{SSIM}$, $\lambda_{ADV}$ and $\lambda_{OT}$ represent the weight assigned to pixel loss, SSIM loss, adversarial loss, and Sinkhorn loss respectively.\par
Similarly, the objective function of $D$ is designed based on the original GAN. In addition, we include domain adaptation loss \cite{30} $(\mathscr{L}_{DA})$  to enforce the $D$ to mimic the latent features of the HR DEM and sharpen spatial attention maps provided an upsampled LR DEM data. The final objective function of ${D}$ becomes
\begin{equation}\label{eq7}
\min _D - \mathbb{E}_{y \sim \mathbb{P}_y}[\log ({D}(\mathbf{y})))] - \mathbb{E}_{\hat{y} \sim \mathbb{P}_{{G}_\theta}}[\log(1-{D}(\mathbf{\hat{y}}))]+  \lambda_{DA}\mathscr{L}_{D A},    
\end{equation}
where $\lambda_{DA}$ is the assigned weight for $\mathscr{L}_{DA}$ in the discriminator objective. The details of $\mathscr{L}_{ADV}$, $\mathscr{L}_{P}$, $\mathscr{L}_{SSIM}$, and $\mathscr{L}_{DA}$ are discussed in Appendix A in supp.
\section{Experiments}
Here, we discuss the necessary experiments and datasets for DEM SR.
\subsection{Datasets}
DEM SR is a relatively unexplored area that suffers from a lack of realistic datasets. Hence, we generate our own DEM SR dataset for this study. From the real-world application point of view, we use real coarse resolution SRTM DEM with a ground sampling distance (GSD) of 30m as input instead of conventional bicubic downsampled while taking Indian HR DEM (GSD=10m) generated from Cartosat-1 stereoscopic satellite as the GT. For the guide, we take the HR MX data (GSD=1.6m) from the Cartosat-2S satellite. The DEMs are upsampled to the resolution of MX images using bicubic interpolation to generate a paired dataset. This helps in increasing the training samples and also assists the model in learning dense HR features from the guide. The dataset consists of 72,000 patches of size $(128,128)$ including various signatures such as vegetation, mountains, and, water regions. We use 40,000 samples for training, 20,000 for cross-validation, and 12,000 for testing, where 10,000 patches belong to the Indian region and the rest outside India. As GT is only available for Indian regions, our model is trained on limited landscape areas. To check its generalization ability, we test our model on data from the Fallbrook region, US, where Cartosat DEM data is unavailable. For these cases, we validate our result based on available 10m DEM data of 3DEP \cite{509}. We further test our trained model by taking other available 30 m DEM like ASTER \cite{504} and AW3D30 \cite{jaxa}. In these cases, we have taken 5000 samples each from different parts of the India for testing.
\begin{table}[tb]
\caption{Quantitative comparison with state-of-the-art methods for both patches of inside and outside India. First and second methods are highlighted in red and green.}
\centering
\label{tab:my-table}
\resizebox{0.85\linewidth}{!}{%
\begin{tabular}{ccccccccc}
\hline \hline
\textbf{Method}       & \multicolumn{2}{c}{\textbf{RMSE (m)}}                                                              & \multicolumn{2}{c}{\textbf{MAE (m)}}                                                               & \multicolumn{2}{c}{\textbf{SSIM(\%)}}                                                              & \multicolumn{2}{c}{\textbf{PSNR}}                                                                  \\ \hline
\textbf{Dataset}      & \multicolumn{1}{c|}{\textbf{Inside}}                       & \textbf{Outside}                      & \multicolumn{1}{c|}{\textbf{Inside}}                       & \textbf{Outside}                      & \multicolumn{1}{c|}{\textbf{Inside}}                       & \textbf{Outside}                      & \multicolumn{1}{c|}{\textbf{Inside}}                       & \textbf{Outside}                      \\ \hline
\textbf{Bicubic}      & \multicolumn{1}{c|}{21.25}                                 & 23.19                                 & \multicolumn{1}{c|}{22.42}                                 & 22.04                                 & \multicolumn{1}{c|}{71.27}                                 & 66.49                                 & \multicolumn{1}{c|}{30.07}                                 & 27.79                                 \\
\textbf{ENetV2} \cite{24}       & \multicolumn{1}{c|}{20.35}                                 & 30.53                                 & \multicolumn{1}{c|}{18.72}                                 & 28.36                                 & \multicolumn{1}{c|}{69.63}                                 & 60.04                                 & \multicolumn{1}{c|}{31.74}                                 & 25.58                                 \\
\textbf{DKN} \cite{506}         & \multicolumn{1}{c|}{12.89}                                 & 21.16                                 & \multicolumn{1}{c|}{11.18}                                 & 19.78                                 & \multicolumn{1}{c|}{73.59}                                 & 68.45                                 & \multicolumn{1}{c|}{32.09}                                 & 28.22                                 \\
\textbf{FDKN} \cite{506}        & \multicolumn{1}{c|}{13.05}                                 & 21.93                                 & \multicolumn{1}{c|}{11.34}                                 & 20.41                                 & \multicolumn{1}{c|}{74.13}                                 & 66.83                                 & \multicolumn{1}{c|}{32.46}                                 & 27.68                                 \\
\textbf{DADA} \cite{508}        & \multicolumn{1}{c|}{37.49}                                 & 40.89                                 & \multicolumn{1}{c|}{32.17}                                 & 37.74                                 & \multicolumn{1}{c|}{73.32}                                 & 69.86                                 & \multicolumn{1}{c|}{27.94}                                 & 26.78                                 \\
\textbf{ESRGAN} \cite{23}      & \multicolumn{1}{c|}{31.33}                                 & {\color[HTML]{32CB00} \textit{20.45}} & \multicolumn{1}{c|}{25.56}                                 & {\color[HTML]{32CB00} \textit{18.34}} & \multicolumn{1}{c|}{{\color[HTML]{32CB00} \textit{82.48}}} & {\color[HTML]{32CB00} \textit{75.67}} & \multicolumn{1}{c|}{29.88}                                 & {\color[HTML]{32CB00} \textit{29.05}} \\
\textbf{FDSR} \cite{507}          & \multicolumn{1}{c|}{{\color[HTML]{32CB00} \textit{12.98}}} & 30.58                                 & \multicolumn{1}{c|}{{\color[HTML]{32CB00} \textit{10.87}}} & 25.28                                 & \multicolumn{1}{c|}{81.49}                                 & 59.81                                 & \multicolumn{1}{c|}{{\color[HTML]{32CB00} \textit{33.77}}} & 25.59                                 \\
\textbf{SiRAN (ours)} & \multicolumn{1}{c|}{{\color[HTML]{CB0000} \textbf{9.28}}}  & {\color[HTML]{CB0000} \textbf{15.74}} & \multicolumn{1}{c|}{{\color[HTML]{CB0000} \textbf{8.51}}}  & {\color[HTML]{CB0000} \textbf{12.25}} & \multicolumn{1}{c|}{{\color[HTML]{CB0000} \textbf{90.59}}} & {\color[HTML]{CB0000} \textbf{83.90}} & \multicolumn{1}{c|}{{\color[HTML]{CB0000} \textbf{35.06}}} & {\color[HTML]{CB0000} \textbf{31.56}} \\ \hline \hline
\end{tabular}%
}
\vspace{-0.5cm}
\end{table}
\vspace{-0.15cm}
\subsection{Implementation Details}
All the experiments are conducted under identical environments. We use $3\times 3$ convolution kernel and leaky ReLU activation except in the last layer where $1\times 1$ kernel is used without any activation. Each DMRB has 64 convolution operations. For FSGT in HTB, we select patch size as $7\times 7$ and the number of heads in the attention block as $M=16$. We use an ADAM optimizer with a fixed learning rate of 0.0001. During adversarial training, we update the critic once every single update in the generator. We set $\lambda_{DA} = 0.1$, $\lambda_P=100$, $\lambda_{str}=1$, $\lambda_{ADV}=1$ and $\lambda_{OT}=0.01$. For estimating $\mathscr{L}_{OT}$, we set $T=10$ and $\varepsilon=0.1$. The entire framework is developed using PyTorch. All the experiments are performed on 2 Nvidia V100 GPUs. We compare our method with traditional bicubic as well as other learning-based state-of-the-art (SOTA) DEM SR methods \cite{24,gisr,gisr2}. For a fair comparison, we also include recent baseline models for image-guided depth SR \cite{23,506,508,507}. All the learning-based methods are trained on our dataset from scratch according to the respective authors' guidelines. Among them, we train \cite{24,gisr,gisr2} without any guide as there is no provision in including an image guide in these methods, whereas, \cite{23,506,508,507} are trained on our dataset in the presence of the guide due to their similar set-up for guided SR. 
\section{Result Analysis}
Here, we analyze both qualitatively and quantitatively, the quality of generated HR DEM by our proposed method.\begin{figure}[t]
    \centering
    \includegraphics[width=0.93\textwidth]{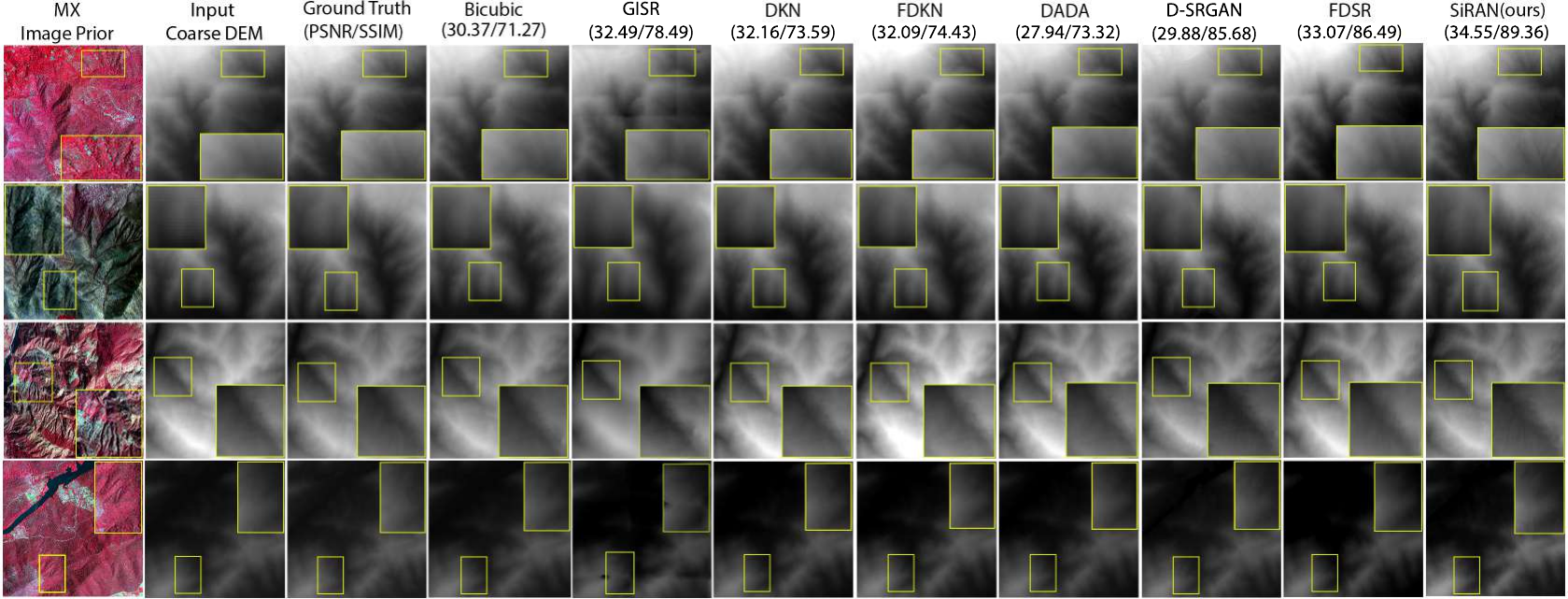}
    \caption{Test results (inside India) for DEM super-resolution (better viewed at 200\%) and comparisons with other baseline methods.}
    \label{fig2}
\end{figure}
\begin{figure}[t]
    \centering
    \includegraphics[width=0.93\textwidth]{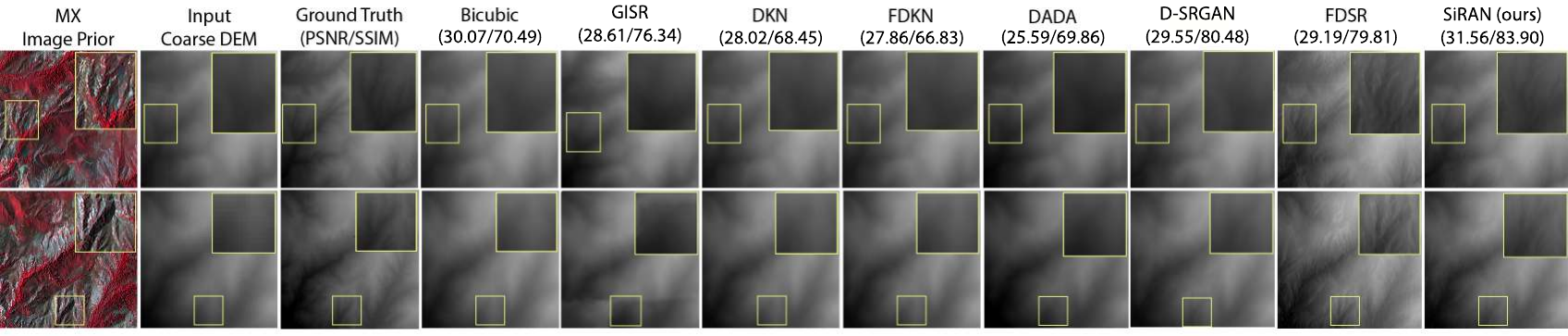}
    \caption{Test results (outside India) for DEM super-resolution (better viewed at 200\%) and comparisons with other baseline methods.}
    \label{fig20}
\end{figure}

\subsection{Quantitative Analysis}
To quantitatively analyze the performance, we use RMSE, MAE, PSNR, and SSIM as the evaluation metrics. Our proposed method outperforms other SOTA methods {for 4 different datasets}, as shown in Table \ref{tab:my-table}. For both inside and outside India images, SiRAN achieves more than 24\% improvement in RMSE and MAE, 8\% in SSIM, and 1.2 dB in PSNR with respect to the second best. Despite having different source domains for reference DEM for outside India cases, SiRAN generates SR DEM closer to GT as depicted in Table \ref{tab:my-table} suggesting better generalization capability of other baseline methods. {This also can be depicted by analyzing on test cases for other LR DEM data like ASTER and AW3D30 as shown in  Table \ref{tab:my-table}. In these cases, SiRAN gains more than 10-18\% improvement in RMSE, 11-27\% in MAE, 4\% in SSIM, and $\sim 1$ dB in PSNR.} Among others, FDSR \cite{507} performs close to our model for Indian patches {as well as for other LR DEM samples}. However, for outside patches, it performs poorly. Although D-SRGAN captures structural details, it has poor RMSE and MAE. Figure \ref{fig3} shows the line profiles of SiRAN and other baselines with respect to GT. Comparatively SiRAN has the lowest bias and follows the true elevation values most closely. This supports the error analysis in Table \ref{tab:my-table}. Table \ref{tab:my-table} shows a comparison of number of parameters and average runtime for $512\times512$ patches. Despite having larger parameters, our model takes comparable inference time due to its effective complexity as discussed in section \ref{3.2}.

\begin{figure}[t]
    \centering
    \includegraphics[width=0.935\textwidth]{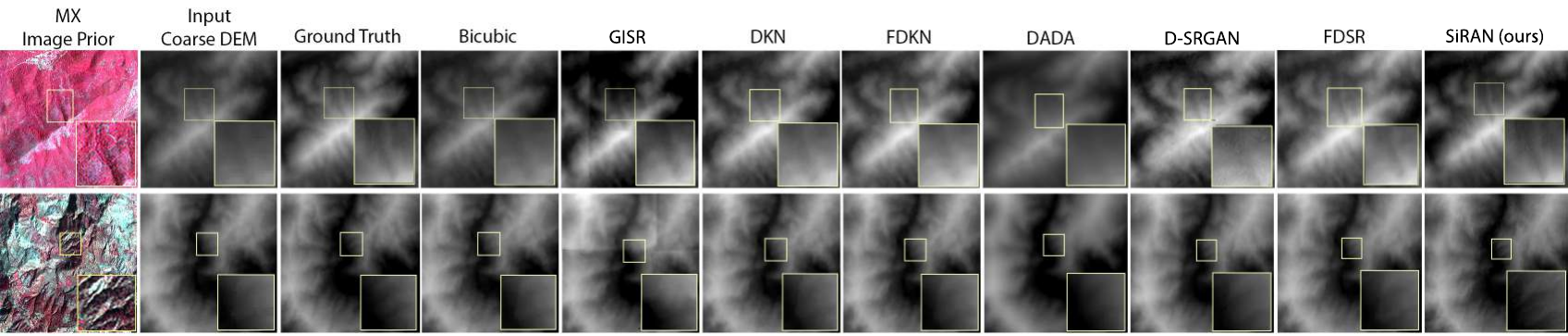}
    \caption{Test results on ASTER (top row) and AW3D30 (bottom row) dataset for DEM super-resolution (better viewed at 200\%) and comparisons with other baseline methods.}
    \label{fig200}
\end{figure}
\begin{figure}[tb]
\begin{minipage}[b]{.45\textwidth}
\centering
\includegraphics[width=0.9\textwidth]{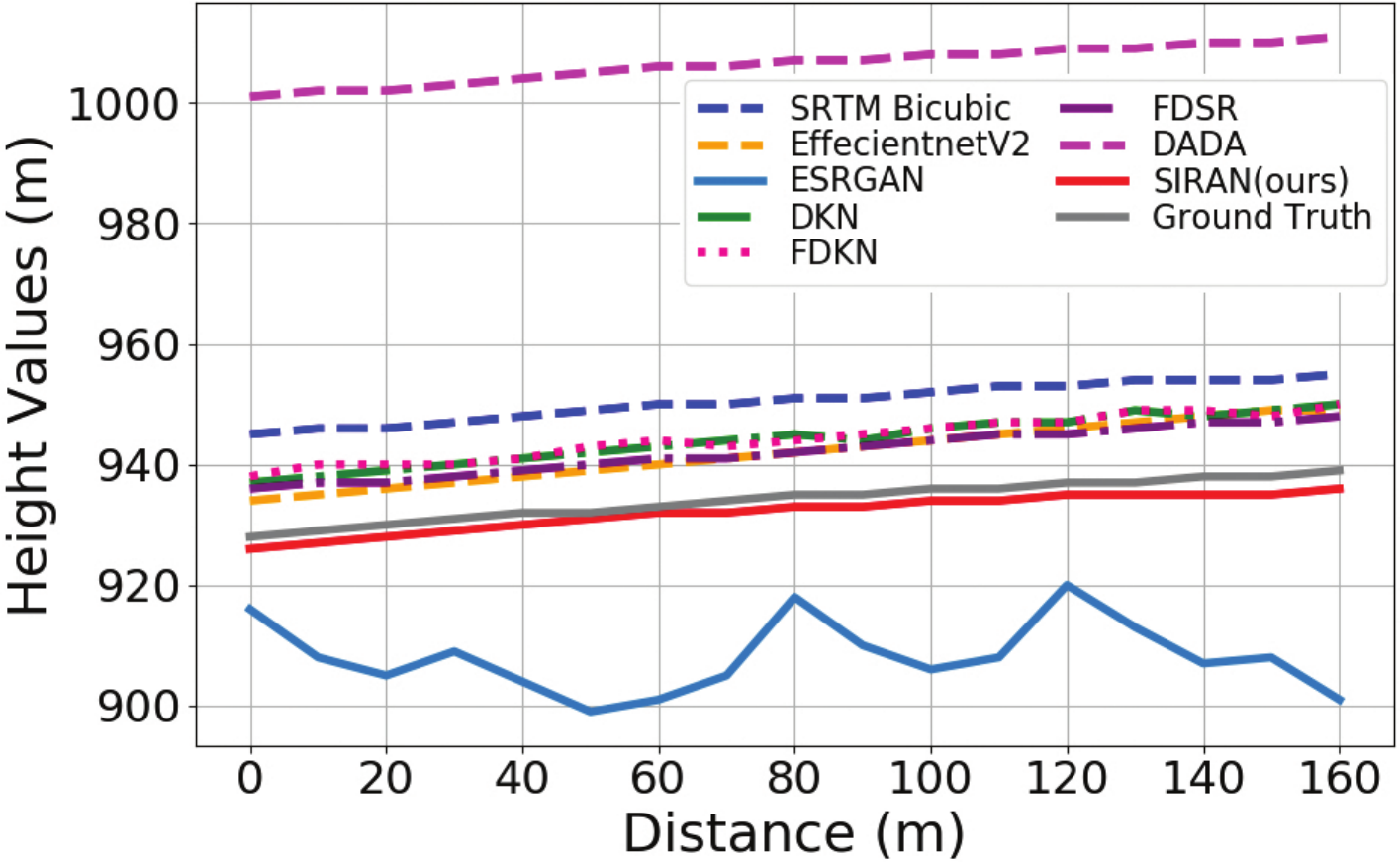}
\caption{Line profile analysis of SiRAN and other baselines.}
\label{fig3}
\end{minipage}
\hfill
\begin{minipage}[b]{.45\textwidth}
\centering
\includegraphics[width=0.9\textwidth]{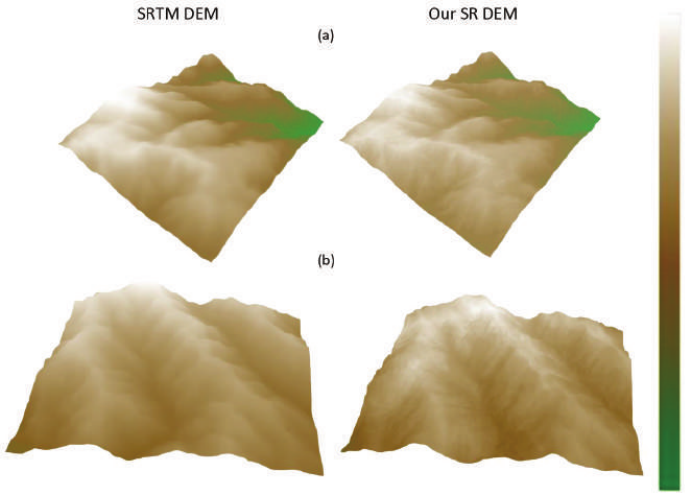}
\caption{Illustration of 3-D visualization of Super-resolved and SRTM DEM}
\label{fig4}
\end{minipage}
\end{figure}
\subsection{Qualitative Analysis}
Figure \ref{fig2} demonstrates the qualitative comparison of DEM SR for patches of India. Clearly, SiRAN highlights key features and comparatively retain more the perceptual quality with respect to GT. D-SRGAN also captures major structural information in its outcomes, however, it tends to produce artifacts and noise in the generated DEM which is depicted in Table \ref{tab:my-table} and Figure \ref{fig3}. In Figure \ref{fig20}, we have compared the outcomes for outside India cases. Here also compared to other SOTA methods, SiRAN is able to generate higher resolution DEM in close proximity to the GT despite having a different source domain. Although FDSR \cite{507} performed well for Indian patches, due to a lack of generalization capability it introduces image details prominently in the generated DEM for test patches outside India. {The generalization ability of these models can also be visualized from \ref{fig200} where we demonstrate visual test cases for LR DEMs of ASTER and AW3D30 datasets. Clearly, SiRAN captures the high-frequency details most effectively in the predicted SR DEM followed by FDSR and D-SRGAN. Among the other models, while DKN and FDKN try to incorporate HR guide details in the SR output, DADA blurs out important features resulting in outputs similar to bicubic interpolation. GISR model also showcases similar results, however, it generates boundary artifacts in their predictions.} In Figure \ref{fig4}, we show 3-D visualization of generated DEMs for a region, where GT is unavailable. We compare it with available SRTM DEM, and clearly, our topographic view of generated DEM captures sharper features in mountainous regions and in the tributaries of the water basin area as shown in Figure \ref{fig4}.
\begin{table}[tb]
\centering
\begin{minipage}[b]{.48\textwidth}
\centering
\caption{Quantitative analysis on effect of different modules for DEM SR.}
\resizebox{0.8\textwidth}{!}{
\begin{tabular}{cccc|cccc}
\hline\hline
\textbf{\begin{tabular}[c]{@{}c@{}}Image\\ Guide\end{tabular}} &
  \textbf{\begin{tabular}[c]{@{}c@{}}DSA \end{tabular}} &
  \textbf{PSA} &
  \textbf{\begin{tabular}[c]{@{}c@{}}FSGT\\ \end{tabular}} &
  \textbf{RMSE (m)} &
  \textbf{MAE (m)} &
  \textbf{SSIM (\%)} &
  \textbf{PSNR} \\ \hline
 \xmark & \xmark & \xmark & \xmark & 20.04         & 17.63         & 75.27          & 30.27          \\
 \cmark & \xmark & \xmark & \xmark & 20.32         & 18.41         & 82.92          & 30.57          \\
 \cmark & \cmark & \xmark & \xmark & 16.06         & 13.62         & 85.68          & 32.08          \\
 \cmark & \cmark & \cmark & \xmark & 13.43         & 11.31         & 87.04          & 32.71          \\
 \cmark & \cmark & \cmark & \cmark & \textbf{9.28} & \textbf{8.51} & \textbf{90.49} & \textbf{35.06} \\ \hline \hline
\end{tabular}}
\label{table2}
\end{minipage}
\hfill
\begin{minipage}[b]{.48\textwidth}
\begin{minipage}[b]{.475\linewidth}
    \caption{Ablation of No. of heads.}
\label{tab:heads}
\resizebox{\textwidth}{!}{%
\begin{tabular}{c|ccc}
\hline
\textbf{\begin{tabular}[c]{@{}c@{}}Number\\ of heads\end{tabular}} & \textbf{\begin{tabular}[c]{@{}c@{}}Params\\ (M)\end{tabular}} & \textbf{PSNR} & \textbf{SSIM} \\ \hline
4                        & 5.29                & 34.34         & 89.04         \\
8                        & 7.41               & 34.55         & 89.36         \\
12                       & 16.37               & 34.59         & 89.64         \\
16                       & 21.36               & 34.61         & 90.09         \\
24                       & 30.22               & 34.72         & 90.13         \\ \hline
\end{tabular}%
}
\end{minipage}
\hfill
\begin{minipage}[b]{.475\linewidth}
\centering
\caption{Model size comparison.}
\label{tab:computation}
\resizebox{\textwidth}{!}{%
\begin{tabular}{c|ccc}
\hline
\textbf{Model} & \textbf{\begin{tabular}[c]{@{}c@{}}Params\\ (M)\end{tabular}} & \textbf{\begin{tabular}[c]{@{}c@{}}FLOPs\\ (G)\end{tabular}}  & \textbf{\begin{tabular}[c]{@{}c@{}}PSNR\\ (dB)\end{tabular}} \\ \hline
SwinIR \cite{swin}        & 11.90               & 215.3                               & 34.41         \\
CAT \cite{cat}            & 16.60               & 360.7                               & 34.16         \\
HAN \cite{han}            & 16.07               & 269.1                               & 33.94         \\
ART \cite{art}            & 11.87               & 278.3                              & 34.25         \\
FSGT (ours)    & 21.36               & 189.4                               & 34.55         \\ \hline
\end{tabular}%
}

\end{minipage}
\end{minipage}
\centering
\vspace{-0.15cm}
\end{table}
\begin{figure}[tb]
\begin{minipage}[b]{.48\textwidth}
\centering
    \includegraphics[width=\textwidth, height=3cm]{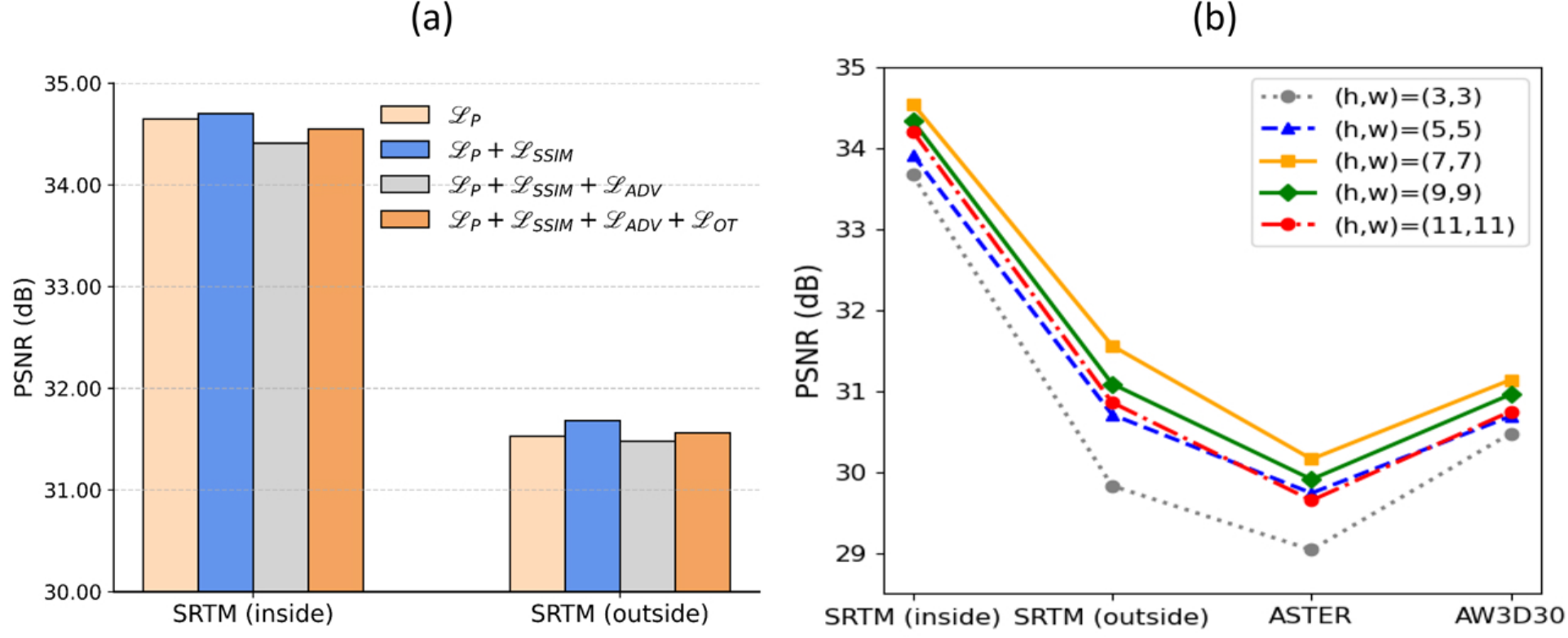}
    \caption{Quantitative ablation study for: (a) introducing different loss functions, and (b) different values of patch size $(h,w)$ on various test dataset.}
    \label{fig:loss_abl4}
\end{minipage}
\hfill
\begin{minipage}[b]{.48\textwidth}
\centering
\includegraphics[width=0.8\textwidth]{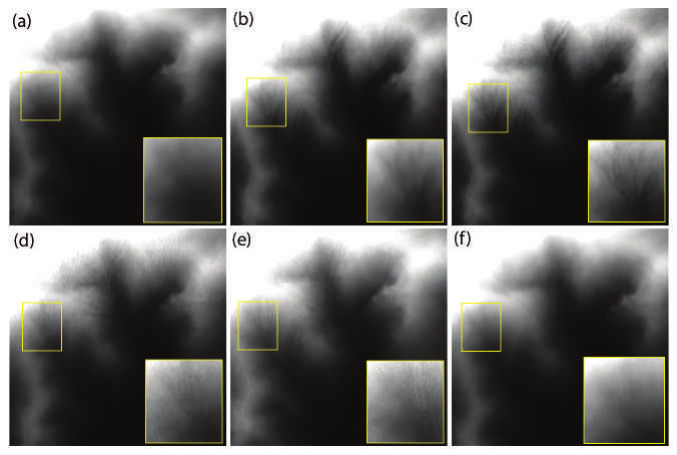}
    \caption{Loss ablation: (a) LR DEM, (b) GT; predicted SR DEM of (c) all losses, (d) $\mathscr{L}_P+\mathscr{L}_{SSIM}+\mathscr{L}_{ADV}$, (e) $\mathscr{L}_P+\mathscr{L}_{SSIM}$, and (f) $\mathscr{L}_P$.}
    \label{fig:loss_abl2}
\end{minipage}
\end{figure}
\subsection{Ablation study}
We discuss different configuration choices we have taken in our designed model for optimal performance in DEM SR in our dataset.\par
\textbf{Choice of different architectural designs:} Table \ref{table2} shows the performance comparisons in terms of different proposed modules. Introducing FSGT brings about the best performance of our framework for DEM SR. However, the utilization of the image guide improves the SSIM only due to its tendency to prominently capture HR MX features in SR DEM. Introducing discriminator spatial attention (DSA) and PSA controls the imitation of guide features phenomenon which results in performance gain in terms of all the metrics. This can also be visualized from Figure \ref{fig5} and \ref{fig6} where we show how $D$ focuses on different features at different depths and also how PSA highlights certain features to give more weight. FSGT further enhances this performance. In this regard, we have also tested with constant $k=\lfloor \frac{3N}{4} \rfloor$, and we have seen more than 0.75 dB performance drop in terms of PSNR and 1.34\% in SSIM.\par
\textbf{Choice of different loss functions:} Figure \ref{fig:loss_abl4} (a) shows the performance of our model with different combinations of loss functions. Introducing $\mathscr{L}_{ADV}$ decreases the PSNR by 0.2-0.3 dB, while adding $\mathscr{L}_{OT}$ improves it by 0.1 dB. Although, it is still less by 0.15 dB compared with $\mathscr{L}_{P} + \mathscr{L}_{SSIM}$ loss combination, the major reason for using $\mathscr{L}_{ADV}$ and $\mathscr{L}_{OT}$ is to improve the overall perceptual quality of SR DEM as shown in Figure \ref{fig:loss_abl2}. However, as depicted in \textbf{Proposition 2}, it provides faster convergence as shown in Figure \ref{fig7}. More experiments are carried out in Appendix E to justify these claims.\par
\begin{figure}[tb]
\begin{minipage}[b]{.32\textwidth}
\centering
\includegraphics[width=0.95\textwidth]{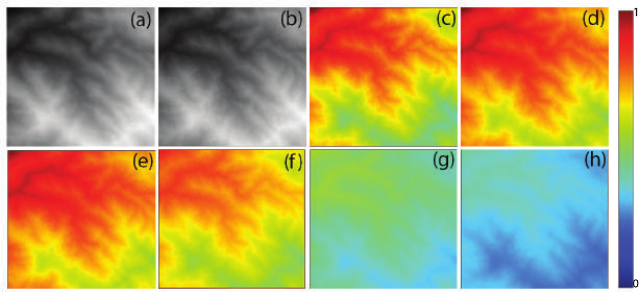}
\caption{(a) Source, (b) Target, (c)-(h) Discriminator spatial attention after each DMRB.}
\label{fig5}
\end{minipage}
\hfill
\begin{minipage}[b]{.32\textwidth}
\centering
\includegraphics[width=\textwidth]{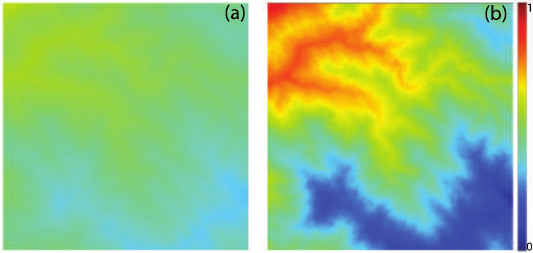}
\caption{Weights of (a) mean DSA ($\textbf{D}_{SA}$),  and (b) after passing it through PSA block.}
\label{fig6}
\end{minipage}
\begin{minipage}[b]{.32\textwidth}
\centering
\includegraphics[width=0.9\textwidth]{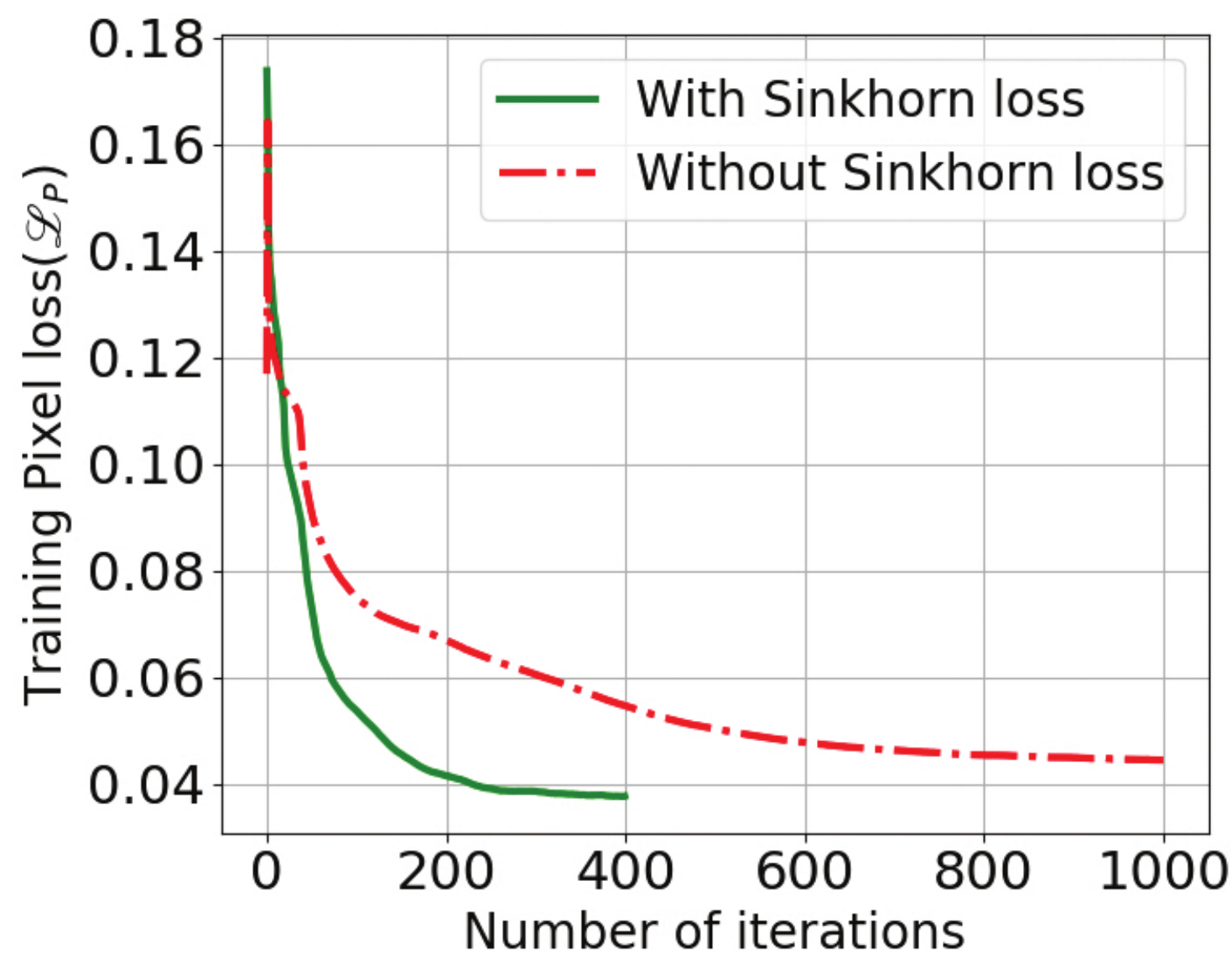}
\caption{Effect of Sinkhorn loss in training convergence.}
\label{fig7}
\end{minipage}
\end{figure}
\textbf{Different patch sizes in FSGT:} Figure \ref{fig:loss_abl4} (b) shows the performance of our model for different patch sizes in FSGT layers. In our case of DEM SR, patch size $7\times 7$ performs the best in terms of PSNR for all of the four datasets.\par
\textbf{Different numbers of heads in M-FSGA:} Table \ref{tab:heads} shows the performance of our model for different numbers of heads ($M$) in proposed M-FSGA. As shown in the table, $M=8$ is the optimal choice in our case. $M=24$ improves the performance by 0.13 dB PSNR but at the cost of 40\% more parameters.\par
\textbf{Model size comparison:} Table \ref{tab:computation} shows the comparison of model size, computational complexity, and performance for DEM SR with respect to popular benchmark transformer models. Clearly, FSGT provides excellent performance while having the least number of FLOPs with competitive model size. 
\section{Conclusion}
In this paper, we present an effective approach for DEM SR using realistic coarse data samples in the presence of an HR MX guide. We propose a novel hybrid transformer model based on FSGT and DMRB. In particular, FSGT is constructed to capture the HR features based on dynamically selected frequencies in a graph attention layer. This also reduces the overall complexity from $\mathcal{O}(Nh^2w^2c)$ to $\mathcal{O}((N-k)hwc)$. To control the in-painting of HR guide features in SR DEM, we also introduce DSA, and through an intense ablation study, we validate the performance of each of these proposed modules. We also present a new adversarial set-up, SiRAN based on Sinkhorn loss optimization. We provided theoretical and empirical evidence to show its efficiency in improving the convergence and speed of training our model. We perform quantitative and qualitative analysis by generating and comparing DEMs related to different signatures {for four different datasets which includes not only the generated inside and outside India test cases corresponding to LR SRTM DEM but also includes LR test samples corresponding to other DEM datasets, ASTER and AW3D30. In all these cases, our model performs preferably by generating close-to-ground truth SR predictions compared to other baseline methods, which showcases its efficiency in capturing high-frequency details as well as better generalization capability.}

\section{Supplementary}
\setcounter{section}{0}
\renewcommand\thesection{\Alph{section}}
\section{Definition of losses used for DEM SR}
In \S 3.3 we have discussed how discriminator spatial attentions are estimated using $D_{SA}(\cdot)$. The domain adaptation loss $\mathscr{L}_{D A}$ is defined as,
\begin{equation}
\mathscr{L}_{D A}=\mathbb{E}_{\mathbf{\tilde{x}} \sim \mathbb{P}_{\tilde{x}}, \mathbf{y} \sim \mathbb{P}_{y}}\left[\left\|{D}_{S A}(\mathbf{\tilde{x}})-{D}_{S A}(\mathbf{y})\right\|_2^2\right].
\end{equation}
where, $\mathbf{y}$ is ground truth DEM and $\mathbf{\tilde{x}}$ is bicubic interpolated coarse SRTM DEM as mentioned in \S 3. The pixel loss $(\mathscr{L}_{P})$ and SSIM loss $(\mathscr{L}_{SSIM})$ and adversarial loss $(\mathscr{L}_{ADV})$ described in \S 3.4 are defined as,
\begin{equation} \label{eq88}
\begin{aligned}
&\mathscr{L}_{P}=\mathbb{E}_{\mathbf{\tilde{x}} \sim \mathbb{P}_{\tilde{x}}, \mathbf{z} \sim \mathbb{P}_{Z}, \mathbf{y} \sim \mathbb{P}_{y}}\left[\left\|\mathbf{y}-{G}(\mathbf{\tilde{x}}, \mathbf{z}\odot A_s(\mathbf{\tilde{x}}))\right\|_2^2\right],\\
&\mathscr{L}_{SSIM} = \mathbb{E}_{\mathbf{\tilde{x}} \sim \mathbb{P}_{\tilde{x}}, \mathbf{z} \sim \mathbb{P}_{Z}, y \sim \mathbb{P}_{y}} -\log ({SSIM}({G}(\mathbf{\tilde{x}}, \mathbf{z}\odot A_s(\mathbf{\tilde{x}})), \mathbf{y})),\\
&\mathscr{L}_{ADV} = \mathbb{E}_{\mathbf{\tilde{x}} \sim \mathbb{P}_{\tilde{x}}, \mathbf{z} \sim \mathbb{P}_{Z}} -\log ({D}({G}(\mathbf{\tilde{x}}, \mathbf{z}\odot A_s(\mathbf{\tilde{x}})))).
\end{aligned}
\end{equation}
where, $A_s(\mathbf{\tilde{x}}) = PSA({D}_{SA}(\mathbf{\tilde{x}}))$ with PSA being polarized self-attention as discussed in \S 3.3.

\section{Proof of Theorem 1: Smoothness of Sinkhorn Loss}\label{appendA}
We will define some of the terminologies, which are necessary for this proof. For all the proofs, we assume, $\mathbf{x} = concat(\mathbf{\tilde{x}}, \mathbf{z}\odot A_s(\mathbf{\tilde{x}}))$. From equation 6 of the main paper, the entropic optimal transport \cite{505} can be defined as, 
\begin{equation} \label{sup:eq1}
\begin{aligned}
&\mathcal{W}_{C,\varepsilon}\left(\mu_{\theta},\nu\right) = 
\inf_{\pi \in \Pi\left(\mu_{\theta},\nu\right)} \int_{\mathcal{X} \times \mathcal{Y}} [C\left(G_{\theta}\left(\mathbf{x}\right),\mathbf{y}\right)] d\pi\left(G_{\theta}\left(\mathbf{x}\right),\mathbf{y}\right)+ \varepsilon I_{\pi}\left(G_{\theta}\left(\mathbf{x}\right),\mathbf{y}\right),\\
& \text{where } I_{\pi}\left(G_{\theta}\left(\mathbf{x}\right),\mathbf{y}\right)) = \int_{\mathcal{X} \times \mathcal{Y}}[\log\left(\frac{\pi\left(G_{\theta}\left(\mathbf{x}\right),\mathbf{y}\right)}{\mu_{\theta}\left(G_{\theta}\left(\mathbf{x}\right)\right)\nu\left(\mathbf{y}\right)}\right)]d\pi\left(G_{\theta}\left(\mathbf{x}\right),\mathbf{y}\right),\\
 & \text{s.t.} \int_{\mathcal{X}} \pi\left(G_{\theta}\left(\mathbf{x}\right),\mathbf{y}\right) dx = \nu\left(\mathbf{y}\right), \: \int_{\mathcal{Y}} \pi\left(G_{\theta}\left(\mathbf{x}\right),\mathbf{y}\right) dy = \mu_{\theta}\left(G_{\theta}\left(\mathbf{x}\right)\right) \: \& \: \pi\left(G_{\theta}\left(\mathbf{x}\right),\mathbf{y}\right) \geq 0.
\end{aligned}
\end{equation}  
The formulation in equation \ref{sup:eq1} corresponds to the primal problem of regularized OT and, this allows us to express the dual formulation of regularized OT as the maximization of an expectation problem, as shown in equation \ref{sup:eq2} \cite{505}.
\begin{equation}\label{sup:eq2}
\begin{aligned}
\mathcal{W}&_{C,\varepsilon}\left(\mu_{\theta},\nu\right) = 
\sup_{\phi, \psi \in \mathbf{\Phi}}  \int_{\mathcal{X}} \phi\left(G_{\theta}\left(\mathbf{x}\right)\right) \, d\mu_{\theta}\left(G_{\theta}\left(\mathbf{x}\right)\right) + \int_{\mathcal{Y}} \psi\left(\mathbf{y}\right) \, d\nu\left(\mathbf{y}\right) \\ 
&- \varepsilon \int_{\mathcal{X}\times \mathcal{Y}} e^{\left(\frac{\phi\left(G_{\theta}\left(\mathbf{x}\right)\right) +  \psi\left(\mathbf{y}\right) - C\left(G_{\theta}\left(\mathbf{x}\right),\mathbf{y}\right)}{\varepsilon}\right)} \, d\mu_{\theta}\left(G_{\theta}\left(\mathbf{x}\right)\right)d\nu\left(\mathbf{y}\right)  + \varepsilon\\
\end{aligned}
\end{equation}
where $\mathbf{\Phi} = \{\left(\phi,\psi\right) \in \mathcal{C}\left(\mathcal{X}\right) \times \mathcal{C}\left(\mathcal{Y}\right)\}$ is set of real valued continuous functions for domain $\mathcal{X}$ and $\mathcal{Y}$ and they are referred as dual potentials. Now, given optimal dual potentials $\phi^{*}\left(\cdot\right)$, and $\psi^{*}\left(\cdot\right)$, the optmal coupling $\pi^{*}\left(\cdot\right)$ as per \cite{505} can be defined as
\begin{equation}\label{sup:eq3}
    \pi^{*}\left(G_{\theta}\left(\mathbf{x}\right),\mathbf{y}\right) = \mu_{\theta}\left(G_{\theta}\left(\mathbf{x}\right)\right) \nu\left(\mathbf{y}\right) e^{\frac{\phi^{*}\left(G_{\theta}\left(\mathbf{x}\right)\right) + \psi^{*}\left(\mathbf{y}\right) - C\left(G_{\theta}\left(\mathbf{x}\right),\mathbf{y}\right)}{\varepsilon}}.
\end{equation}
 To prove \textbf{Theorem 1}, we need an important property regarding its Lipschitz continuity of the dual potentials, which is explained in the following \textbf{Lemma}.
\begin{lemma}
   \textit{If $C\left(\cdot\right)$ is $L_0$ Lipschitz, then the dual potentials are also $L_0$ Lipschitz.}
\end{lemma}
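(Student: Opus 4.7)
The plan is to derive the Lipschitz bound from the first-order optimality conditions of the dual problem in equation~(\ref{sup:eq2}), combined with the standard log-sum-exp smoothing argument used throughout entropic optimal transport. First, differentiating the concave dual objective in $\phi$ at the optimum $(\phi^*, \psi^*)$ gives the Sinkhorn fixed-point relation
\[
\int_{\mathcal{Y}} \exp\Bigl(\tfrac{\phi^*(G_\theta(\mathbf{x})) + \psi^*(\mathbf{y}) - C(G_\theta(\mathbf{x}), \mathbf{y})}{\varepsilon}\Bigr)\, d\nu(\mathbf{y}) = 1,
\]
which rearranges into the softmin representation
\[
\phi^*(u) = -\varepsilon \log \int_{\mathcal{Y}} \exp\Bigl(\tfrac{\psi^*(\mathbf{y}) - C(u, \mathbf{y})}{\varepsilon}\Bigr)\, d\nu(\mathbf{y}), \qquad u \in \mathcal{X}.
\]
A symmetric computation yields an analogous softmin expression for $\psi^*$ as an integral against $\mu_\theta$.

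Next, I would take two arbitrary points $u_1, u_2 \in \mathcal{X}$ and use the $L_0$-Lipschitz assumption on $C$ in its first argument to sandwich the integrand at $u_1$ between the integrand at $u_2$ multiplied by $e^{\pm L_0 \|u_1 - u_2\|/\varepsilon}$. Since these multiplicative factors do not depend on $\mathbf{y}$, they pass outside the integral, so the two normalizing integrals satisfy the same two-sided bound. Applying $-\varepsilon \log(\cdot)$ to both sides then collapses the exponential factors into the additive estimate $|\phi^*(u_1) - \phi^*(u_2)| \le L_0 \|u_1 - u_2\|$, proving that $\phi^*$ is $L_0$-Lipschitz. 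The identical argument applied to the softmin representation of $\psi^*$, using Lipschitzness of $C$ in its second argument, delivers the analogous bound for $\psi^*$.

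There is no serious obstacle in the derivation itself; the only care needed is in justifying the fixed-point relation. This requires existence and a.e.\ finiteness of the optimal potentials on the compact supports $\mathcal{X}, \mathcal{Y}$, which follows from strict concavity and coercivity of the regularized dual under the standing assumption that $C$ is $\mathcal{C}^{\infty}$ and bounded on the compact product space. With that minor measure-theoretic check in place the Lipschitz bound is immediate, and it is exactly the ingredient needed downstream in the proof of Theorem~1 to translate bounds on $\nabla_\theta C$ into bounds on the optimal dual potentials and ultimately to control the smoothness constant $\Gamma_\varepsilon$.
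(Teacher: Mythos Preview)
Your argument is correct and rests on the same softmin representation of the optimal potentials that the paper uses. The only difference is in the final step: the paper differentiates the softmin expression for $\phi^*$ directly, writes $\nabla_{\hat y}\phi^*(\hat y)$ as a Gibbs average of $\nabla_{\hat y} C(\hat y,\mathbf{y})$, and bounds its norm by $L_0$ via Cauchy--Schwarz; you instead run a two-point sandwiching argument on the log-sum-exp integral to get $|\phi^*(u_1)-\phi^*(u_2)|\le L_0\|u_1-u_2\|$ without ever differentiating. Your route is marginally more elementary since it uses only the Lipschitz assumption on $C$ rather than a gradient bound (though the paper's $\mathcal{C}^\infty$ hypothesis makes this distinction moot here), while the paper's gradient computation has the small advantage of dovetailing directly with the later estimates in Theorem~1, which are phrased in terms of $\nabla_\theta C$. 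Either way the content is the same.
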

\begin{proof}
    Assuming $\mathbf{\hat{y}} = G_{\theta}\left(\mathbf{x}\right)$, then $C\left(\mathbf{\hat{y}},\mathbf{y}\right)$ is $L_0$-Lipschitz in $\mathbf{\hat{y}}$. As, the entropy $I_{\pi}\left(\cdot\right)$ is selected as Shannon entropy, according to \cite{33} using the softmin operator, the optimal potential $\phi^{*}\left(\cdot\right)$ satisfy the following equation
    \begin{equation}\label{sup:eq4}
        \phi^{*}\left(\mathbf{\hat{y}}\right) = -\varepsilon \ln{\left[\int_{\mathcal{Y}} \exp{\left(\frac{\psi^{*}\left(\mathbf{y}\right)-C\left(\mathbf{\hat{y}},\mathbf{y}\right)}{\varepsilon}\right)}d\mathbf{y}\right]}
    \end{equation}
    Now, to estimate the Lipschitz of $\phi^{*}$, we have to find the upper bound of $||\nabla_{\mathbf{\hat{y}}} \phi^{*}\left(\mathbf{\hat{y}}\right)||$. Hence, taking the gradient of equation \ref{sup:eq4} with respect to $\mathbf{\hat{y}}$, the upper-bound of its norm can be written as,
  
    \begin{align}\label{sup:eq5}
         ||\nabla_{\mathbf{\hat{y}}} \phi^{*}\left(\mathbf{\hat{y}}\right)|| = \frac{||\int_{\mathcal{Y}} \exp{\left(\frac{\psi^{*}\left(\mathbf{y}\right)-C\left(\mathbf{\hat{y}},\mathbf{y}\right)}{\varepsilon}\right)}\nabla_{\mathbf{\hat{y}}} C\left(\mathbf{\hat{y}},\mathbf{y}\right)dy||}{||\int_{\mathcal{Y}} \exp{\left(\frac{\psi^{*}\left(\mathbf{y}\right)-C\left(\mathbf{\hat{y}},\mathbf{y}\right)}{\varepsilon}\right)}dy||}
    \end{align}
       Now due to Lipschitz continuity of $C\left(\mathbf{\hat{y}},\mathbf{y}\right)$, we can say $\nabla_{\mathbf{\hat{y}}} ||C\left(\mathbf{\hat{y}},\mathbf{y}\right)||\leq L_0$. Hence, using Cauchy-Schwarz inequality we will get,
       \begin{align}\label{sup:eq6}
         ||\nabla_{\mathbf{\hat{y}}} \phi^{*}\left(\mathbf{\hat{y}}\right)|| \leq ||\nabla_{\mathbf{\hat{y}}} C\left(\mathbf{\hat{y}},\mathbf{y}\right)||
         \frac{||\int_{\mathcal{Y}} \exp{\left(\frac{\psi^{*}\left(\mathbf{y}\right)-C\left(\mathbf{\hat{y}},\mathbf{y}\right)}{\varepsilon}\right)dy||}}{||\int_{\mathcal{Y}} \exp{\left(\frac{\psi^{*}\left(\mathbf{y}\right)-C\left(\mathbf{\hat{y}},\mathbf{y}\right)}{\varepsilon}\right)}dy||} = L_0.
    \end{align}
This completes the proof of the lemma. An alternative proof is provided by \cite{511} in Proposition 4. Similarly, it can be proved for the other potential term.
\end{proof}
For any $\theta_1$, $\theta_2\in \Theta$ will result in different coupling solutions $\pi^{*}_{i}$, for $i=1,2$. Now, based on Danskins' theorem for optimal coupling $\pi^{*}\left(\theta\right)$, we can write
\begin{align}\label{sup:eq7}
    \nabla_{\theta} \mathcal{W}_{C,\varepsilon}\left(\mu_{\theta},\nu\right) = \mathbb{E}_{G_{\theta}\left(\mathbf{x}\right),\mathbf{y} \sim \pi^{*}\left(\theta\right)}\left[\nabla_\theta C\left(G_{\theta}\left(\mathbf{x}\right),\mathbf{y}\right)\right]
\end{align}
Therefore, for any $\theta_1$ and $\theta_2$, we can write,
\begin{equation}
\begin{aligned}\label{sup:eq77}
    &||\nabla_{\theta} \mathcal{W}_{C,\varepsilon}\left(\mu_{\theta_1},\nu\right) - \nabla_{\theta} \mathcal{W}_{C,\varepsilon}\left(\mu_{\theta_2},\nu\right)|| \leq \\
    & ||\mathbb{E}_{G_{\theta_1}\left(\mathbf{x}\right),\mathbf{y} \sim \pi^{*}_1}\left[\nabla_\theta C\left(G_{\theta_1}\left(\mathbf{x}\right),\mathbf{y}\right)\right] - \mathbb{E}_{G_{\theta_1}\left(\mathbf{x}\right),\mathbf{y} \sim \pi^{*}_2}\left[\nabla_\theta C\left(G_{\theta_1}\left(\mathbf{x}\right),\mathbf{y}\right)\right]|| \\
    &+ ||\mathbb{E}_{G_{\theta_1}\left(\mathbf{x}\right),\mathbf{y} \sim \pi^{*}_2}\left[\nabla_\theta C\left(G_{\theta_1}\left(\mathbf{x}\right),\mathbf{y}\right)\right] - \mathbb{E}_{G_{\theta_2}\left(\mathbf{x}\right),\mathbf{y} \sim \pi^{*}_2}\left[\nabla_\theta C\left(G_{\theta_2}\left(\mathbf{x}\right),\mathbf{y}\right)\right]||\\
    & \leq L_0L||\pi^{*}_1 - \pi^{*}_2|| + L_1L||\theta_1 - \theta_2||
\end{aligned}
\end{equation}
Now with respect to different $\theta_i$, for $i=1,2$ with different pair of dual potentials, the $||\pi^{*}_1 - \pi^{*}_2||$ can be written as below. For simplicity we denote $\mu_\theta \equiv \mu_{\theta}\left(G_{\theta}\left(\mathbf{x}\right)\right)$ and $\nu \equiv \nu\left(\mathbf{y}\right)$.
\begin{equation}\label{sup:eq8}
    \begin{aligned}
    ||\pi^{*}_1 - \pi^{*}_2|| &= ||\mu_{\theta_1}\nu \exp{\left(\frac{\phi^{*}\left(G_{\theta_1}\left(\mathbf{x}\right)\right) + \psi^{*}\left(\mathbf{y}\right) - C\left(G_{\theta_1}\left(\mathbf{x}\right),\mathbf{y}\right)}{\varepsilon}\right)} \\
    &- \mu_{\theta_2}\nu \exp{\left(\frac{\phi^{*}\left(G_{\theta_2}\left(\mathbf{x}\right)\right) + \psi^{*}\left(\mathbf{y}\right) - C\left(G_{\theta_2}\left(\mathbf{x}\right),\mathbf{y}\right)}{\varepsilon}\right)}||  \\
  &\leq ||\nu \exp{\left(\frac{\phi^{*}\left(G_{\theta_1}\left(\mathbf{x}\right)\right) + \psi^{*}\left(\mathbf{y}\right) - C\left(G_{\theta_1}\left(\mathbf{x}\right),\mathbf{y}\right)}{\varepsilon}\right)}\left(\mu_{\theta_1}-\mu_{\theta_2}\right)|| \\
  & + ||\mu_{\theta_2}\nu \left[\exp{\left(\frac{\phi^{*}\left(G_{\theta_1}\left(\mathbf{x}\right)\right) + \psi^{*}\left(\mathbf{y}\right) - C\left(G_{\theta_1}\left(\mathbf{x}\right),\mathbf{y}\right)}{\varepsilon}\right)}\right.\\
  &-\left.\exp{\left(\frac{\phi^{*}\left(G_{\theta_2}\left(\mathbf{x}\right)\right) + \psi^{*}\left(\mathbf{y}\right) - C\left(G_{\theta_2}\left(\mathbf{x}\right),\mathbf{y}\right)}{\varepsilon}\right)} \right]|| 
\end{aligned}
\end{equation}
From \cite{34}, we know, as the dual potentials are $L_0$-Lipschitz, $\forall G_{\theta}\left(\mathbf{x}\right)\in \mathcal{X}$, we can write, $\phi^{*}\left(G_{\theta}\left(\mathbf{x}\right)\right)\leq L_0|G_{\theta}\left(\mathbf{x}\right)|$. And from property of c-transform, for $\forall \mathbf{y}\in \mathcal{Y}$ we can also write $\psi^{*}\left(\mathbf{y}\right)\leq \max_{G_{\theta}\left(\mathbf{x}\right)} \phi^{*}\left(G_{\theta}\left(\mathbf{x}\right)\right) - C\left(G_{\theta}\left(\mathbf{x}\right),\mathbf{y}\right)$. We assume $\mathcal{X}$ to be a bounded set in our case, hence, denoting $|\mathcal{X}|$ as the diameter of the space, at optimality, we can get that $\forall G_{\theta}\left(\mathbf{x}\right)\in \mathcal{X}, \: \mathbf{y}\in \mathcal{Y}$
\begin{equation}\label{sup:eq10}
\begin{aligned}
    &\Rightarrow \phi^{*}\left(G_{\theta}\left(\mathbf{x}\right)\right) + \psi^{*}\left(\mathbf{y}\right) \leq 2L_0|\mathcal{X}| + ||C||_{\infty}\\
    &\Rightarrow \exp{\left(\frac{\phi^{*}\left(G_{\theta}\left(\mathbf{x}\right)\right) + \psi^{*}\left(\mathbf{y}\right) - C\left(G_{\theta}\left(\mathbf{x}\right),\mathbf{y}\right)}{\varepsilon}\right))} \leq \exp{\left(2\frac{L_0|\mathcal{X}| + ||C||_{\infty}}{\varepsilon}\right)}
\end{aligned}
\end{equation}
 Hence, the exponential terms in equation \ref{sup:eq8} are bounded, and we can assume it has a finite Lipschitz constant $L_{exp}$. Taking $\kappa = 2\left(L_0|\mathcal{X}| + ||C||_{\infty}\right)$, and using Cauchy-Schwarz, we can rewrite equation \ref{sup:eq8} as,
 \begin{equation}\label{sup:eq11}
\begin{aligned}
    ||\pi^{*}_1 - \pi^{*}_2|| &\leq \exp{\left(\frac{\kappa}{\varepsilon}\right)}||\nu||.||\mu_{\theta_1}-\mu_{\theta_2}|| \\
&+L_{exp}||\mu_{\theta_2}||.||\nu||.||\frac{\left(\phi^{*}\left(G_{\theta_1}\left(\mathbf{x}\right)\right)- \phi^{*}\left(G_{\theta_2}\left(\mathbf{x}\right)\right)\right)- \left(C\left(G_{\theta_1}\left(\mathbf{x}\right),\mathbf{y}\right)-C\left(G_{\theta_2}\left(\mathbf{x}\right),\mathbf{y}\right)\right)}{\varepsilon}||\\
&\leq \exp{\left(\frac{\kappa}{\varepsilon}\right)}||\nu||.||\mu_{\theta_1}-\mu_{\theta_2}|| + 2\frac{L_{exp}L_0L}{\varepsilon}||\mu_{\theta_2}||.||\nu||.||\theta_1-\theta_2|| 
\end{aligned}
\end{equation}
Now, as the input space $\mathcal{X}$ and output space $\mathcal{Y}$ are bounded, the corresponding measures $\mu_\theta$ and $\nu$ will also be bounded. We assume, $||\mu_\theta|| \leq \lambda_1$ and $||\nu||\leq \lambda_2$. If we apply equation \ref{sup:eq10} in equation \ref{sup:eq3}, to get the upper bound of the coupling function, we will get $||\pi^{*}_1 - \pi^{*}_2||\leq \exp{\left(\frac{\kappa}{\varepsilon}\right)}||\nu||.||\mu_{\theta_1}-\mu_{\theta_2}||$ which is less than the bound in equation \ref{sup:eq11}. Then, we can find some constant upper bound of $||\pi^{*}_1 - \pi^{*}_2||$, using the assumed bounds of measures and can write $||\pi^{*}_1 - \pi^{*}_2||\leq \exp{\left(\frac{\kappa}{\varepsilon}\right)}||\nu||.||\mu_{\theta_1}-\mu_{\theta_2}||\leq K$, such that, 
\begin{equation*}
    K \leq  \exp{\left(\frac{\kappa}{\varepsilon}\right)}||\nu||.||\mu_{\theta_1}-\mu_{\theta_2}|| + 2\frac{L_{exp}L_0L}{\varepsilon}||\mu_{\theta_2}||.||\nu||.||\theta_1-\theta_2||
\end{equation*}
Then using the marginal condition as shown in in equation \ref{sup:eq1}, we can write equation \ref{sup:eq11} as,
\begin{equation}\label{sup:eq12}
\begin{aligned}
    K & \leq \lambda_1\exp{\left(\frac{\kappa}{\varepsilon}\right)}||\int_{\mathcal{X}} \pi^{*}_1 d\mathbf{x} - \int_{\mathcal{X}} \pi^{*}_2 d\mathbf{x}|| + 2\lambda_1 \lambda_2 \frac{L_{exp}L_0L}{\varepsilon}||\theta_1-\theta_2||\\
    & \leq \lambda_1\exp{\left(\frac{\kappa}{\varepsilon}\right)} \int_{\mathcal{X}} ||\pi^{*}_1 - \pi^{*}_2||.|d\mathbf{x}| + 2\lambda_1 \lambda_2 \frac{L_{exp}L_0L}{\varepsilon}||\theta_1-\theta_2||\\
    & \leq \lambda_1\exp{\left(\frac{\kappa}{\varepsilon}\right)}K \int_{\mathcal{X}} |d\mathbf{x}| +2\lambda_1 \lambda_2 \frac{L_{exp}L_0L}{\varepsilon}||\theta_1-\theta_2|| 
\end{aligned}
\end{equation}
The input set is a compact set such that $\mathcal{X} \subset \mathbb{R}^d$. So, assuming $m$ and $M$ to be the minimum and maximum value in set $\mathcal{X}$ and considering the whole situation in discrete space, equation \ref{sup:eq12}, can be rewritten as,
\begin{equation}\label{sup:eq13}
\begin{aligned}
    K &\leq \lambda_1\exp{\left(\frac{\kappa}{\varepsilon}\right)}K \sum_{\mathbf{x}\in \mathcal{X}} |\mathbf{x}| + 2\lambda_1 \lambda_2 \frac{L_{exp}L_0L}{\varepsilon}||\theta_1-\theta_2||\\
    & \leq \lambda_1\exp{\left(\frac{\kappa}{\varepsilon}\right)}Kd\max\left(||M||,|||m|\right) + 2\lambda_1 \lambda_2 \frac{L_{exp}L_0L}{\varepsilon}||\theta_1-\theta_2||, 
\end{aligned}
\end{equation}
Now, taking $B = d\max\left(||M||,|||m|\right)$, and doing necessary subtraction and division on both sides of equation \ref{sup:eq13}, it can be rewritten as
\begin{equation}\label{sup:eq14}
    \begin{aligned}
        K & \leq \frac{2\lambda_1 \lambda_2L_{exp}L_0L}{\varepsilon\left(1-\lambda_1 B\exp{\left(\frac{\kappa}{\varepsilon}\right)}\right)}||\theta_1-\theta_2||\\
    & \leq \frac{2\lambda_1 \lambda_2L_{exp}L_0L}{\varepsilon\left(1+\lambda_1 B\exp{\left(\frac{\kappa}{\varepsilon}\right)}\right)}||\theta_1-\theta_2||
    \end{aligned}
\end{equation}
Equation \ref{sup:eq14}, satisfies because $\frac{\kappa}{\varepsilon}\geq0$. As, $||\pi^{*}_1 - \pi^{*}_2||\leq K$, from equation \ref{sup:eq14}, it can be written as 
\begin{equation}\label{sup:eq15}
    ||\pi^{*}_1 - \pi^{*}_2|| \leq \frac{2\lambda_1 \lambda_2L_{exp}L_0L}{\varepsilon\left(1+\lambda_1 B\exp{\left(\frac{\kappa}{\varepsilon}\right)}\right)}||\theta_1-\theta_2||
\end{equation}
Substituting equation \ref{sup:eq15} in equation \ref{sup:eq77}, we will get,
\begin{equation}\label{sup:eq16}
\begin{aligned}
    ||\nabla_{\theta} \mathcal{W}_{C,\varepsilon}\left(\mu_{\theta_1},\nu\right) - \nabla_{\theta} \mathcal{W}_{C,\varepsilon}\left(\mu_{\theta_2},\nu\right)|| &\leq L_0L||\pi^{*}_1 - \pi^{*}_2|| + L_1L||\theta_1 - \theta_2||\\
    & \leq \left(L_1L + \frac{2\lambda_1 \lambda_2L_{exp} L_0^2 L^2}{\varepsilon\left(1+\lambda_1 B\exp{\left(\frac{\kappa}{\varepsilon}\right)}\right)}\right)||\theta_1-\theta_2|| 
\end{aligned}
\end{equation}
So, the EOT problem defined in equation \ref{sup:eq1} has $\hat{\Gamma}_{\varepsilon}$ smoothness in $\theta$ with $\hat{\Gamma}_{\varepsilon} = L_1L + \frac{2\lambda_1 \lambda_2L_{exp} L_0^2 L^2}{\varepsilon\left(1+\lambda_1 B\exp{\left(\frac{\kappa}{\varepsilon}\right)}\right)}$. From this, we can derive the smoothness of Sinkhorn loss defined in equation 3 of main paper. Note that only the first two terms in this equation are $\theta$ dependent. Therefore, they only contribute to the gradient approximation and both of them will satisfy the same smoothness condition as defined in equation \ref{sup:eq16}. So, if Sinkhorn loss has smoothness $\Gamma_{\varepsilon}$, it will satisfy,
$\Gamma_{\varepsilon} = \frac{3}{2}\hat{\Gamma}_{\varepsilon}$. In general, we can define the smoothness of Sinkhorn loss with $\left(\theta_1, \theta_2\right) \in \Theta$ as,
\begin{equation}
\begin{aligned}
    ||\nabla_{\theta}S_{C,\varepsilon}\left(\mu_{\theta_1}, \nu\right) - \nabla_{\theta}S_{C,\varepsilon}\left(\mu_{\theta_2}, \nu\right)||\leq \mathcal{O} \left(L_1L + \frac{2L_0^2 L^2}{\varepsilon\left(1+B\exp{\left(\frac{\kappa}{\varepsilon}\right)}\right)}\right)||\theta_1-\theta_2||
\end{aligned}
\end{equation}
This completes the statement of \textbf{Theorem 1}

\section{Proof of proposition 1: Upper-bound of expected gradient in SiRAN set-up}\label{theorem2}
This proof is inspired by \cite{501}. Assuming $\Gamma = \mathcal{O}\left(L_1 + \frac{2L_0^2}{\varepsilon\left(1+B\exp{\left(\frac{\kappa}{\varepsilon}\right)}\right)}\right)$ be the smoothness in $p$ for Sinkhorn loss $S_{C,\varepsilon}\left(\mu_{\theta}\left(p\right), \nu\left(\mathbf{y}\right)\right)$, where $p=G_{\theta}\left(\mathbf{x}\right)$. For simplicity, we use a common set for inputs and outputs as $\mathcal{P}$. Hence, to approximate the gradient of Sinkhorn loss, using Jensen's inequality, we can write,
\begin{equation}
\begin{aligned}\label{sup:eq18}
    ||\nabla_{\theta} \mathbb{E}_{\left(\mathbf{x},\mathbf{y}\right)\sim \mathcal{P}}[S_{C,\varepsilon}\left(\mu_{\theta}\left(G_{\theta}\left(\mathbf{x}\right)\right), \nu\left(\mathbf{y}\right)\right)]|| &\leq \mathbb{E}_{\left(\mathbf{x},\mathbf{y}\right)\sim \mathcal{P}} \left[||\nabla_{\theta} S_{C,\varepsilon}\left(\mu_{\theta}\left(G_{\theta}\left(\mathbf{x}\right)\right), \nu\left(\mathbf{y}\right)\right)||\right]\\
     & \leq \mathbb{E}_{\left(\mathbf{x},\mathbf{y}\right)\sim \mathcal{P}} \left[ \underbrace{||\nabla_p S_{C,\varepsilon}\left(\mu_{\theta}\left(p\right), \nu\left(\mathbf{y}\right)\right)|| . ||\nabla_\theta G_{\theta}\left(\mathbf{x}\right)||}_\textrm{Cauchy-Schwarz inequality}\right]\\
     & \leq L\mathbb{E}_{\left(\mathbf{x},\mathbf{y}\right)\sim \mathcal{P}} \left[||\nabla_p S_{C,\varepsilon}\left(\mu_{\theta}\left(p\right), \nu\left(\mathbf{y}\right)\right)||\right] 
\end{aligned}
\end{equation}
Say, for optimized parameter $\theta^{*}$, $t= G_{\theta^{*}}(\mathbf{x})$. Since, $||\theta - \theta^{*}||$, we can write using the smoothness of sinkhorn loss and Lipschitz of model parameters,
\begin{equation}
\begin{aligned}\label{sup:eq19}
    &||\nabla_p S_{C,\varepsilon}\left(\mu_{\theta}\left(p\right), \nu\left(\mathbf{y}\right)\right)|| - ||\nabla_t S_{C,\varepsilon}\left(\mu_{\theta^{*}}\left(t\right), \nu\left(\mathbf{y}\right)\right)|| \\
    &\leq ||\nabla_p S_{C,\varepsilon}\left(\mu_{\theta}\left(p\right), \nu\left(\mathbf{y}\right)\right) - \nabla_t S_{C,\varepsilon}\left(\mu_{\theta^{*}}\left(t\right), \nu\left(\mathbf{y}\right)\right)||\\
    & \leq \Gamma ||p-t|| = \Gamma ||G(\theta)(\mathbf{x}) - G(\theta^{*}(\mathbf{x}))||\\
    & \leq \Gamma L ||\theta - \theta^{*}|| \leq \Gamma L \epsilon 
\end{aligned}
\end{equation}
At optimal condition, $||\nabla_t S_{C,\varepsilon}\left(\mu_{\theta^{*}}\left(t\right), \nu\left(\mathbf{y}\right)\right)|| = 0$ as the distributions of $y$ and $t = G_{\theta^*}(\mathbf{x})$ are aligned for optimal $\theta^*$. So, by substituting equation \ref{sup:eq19} in equation \ref{sup:eq18}, we will get
\begin{equation}
\begin{aligned}\label{sup:eq20}
    ||\nabla_{\theta} \mathbb{E}_{\left(\mathbf{x},\mathbf{y}\right)\sim \mathcal{P}}\left[S_{C,\varepsilon}\left(\mu_{\theta}\left(G_{\theta}\left(\mathbf{x}\right)\right), \nu\left(\mathbf{y}\right)\right)\right]||\leq L^2\Gamma\epsilon 
\end{aligned}
\end{equation}
From Lemma 1 of \cite{501}, we get,
\begin{equation}
\begin{aligned}\label{sup:eq21}
    ||\nabla_{\theta} \mathbb{E}_{\left(\mathbf{x},\mathbf{y}\right)\sim \mathcal{P}} \left[l(G_{\theta}(\mathbf{x}),\mathbf{y})\right]||\leq L^2\beta\epsilon
\end{aligned}    
\end{equation}
Similarly, from Lemma 2 of \cite{501}, we get
\begin{equation}
\begin{aligned}\label{sup:eq22}
    ||-\nabla_{\theta} \mathbb{E}_{\left(\mathbf{x},\mathbf{y}\right)\sim \mathcal{P}} \left[g(\psi;G_{\theta}(\mathbf{x}))\right]||\leq L\delta 
\end{aligned}    
\end{equation}
Here, $\psi$ is parameters of discriminator $D$. So using equations \ref{sup:eq20}, \ref{sup:eq21}, and \ref{sup:eq22}, for the combination of losses we will get,
\begin{equation}
    \begin{aligned}
    & ||\nabla_{\theta} \mathbb{E}_{\left(\mathbf{x},\mathbf{y}\right)\sim \mathcal{P}} [l(G_{\theta}(\mathbf{x}),\mathbf{y}) + S_{C,\varepsilon}\left(\mu_{\theta}\left(G_{\theta}\left(\mathbf{x}\right)\right), \nu\left(\mathbf{y}\right)\right)
    -g(\psi;G_{\theta}(\mathbf{x}))]|| \\&\leq ||\nabla_{\theta} \mathbb{E}_{\left(\mathbf{x},\mathbf{y}\right)\sim \mathcal{P}} \left[l(G_{\theta}(\mathbf{x}),\mathbf{y})\right]|| + ||\nabla_{\theta} \mathbb{E}_{\left(\mathbf{x},\mathbf{y}\right)\sim \mathcal{P}}\left[S_{C,\varepsilon}\left(\mu_{\theta}\left(G_{\theta}\left(\mathbf{x}\right)\right), \nu\left(\mathbf{y}\right)\right)\right]|| \\
    &+  ||-\nabla_{\theta} \mathbb{E}_{\left(\mathbf{x},\mathbf{y}\right)\sim \mathcal{P}} \left[g(\psi;G_{\theta}(\mathbf{x}))\right]||\\
    &\leq L^2\beta\epsilon + L^2\Gamma\epsilon + L\delta = L^2\epsilon(\beta + \Gamma) +  L\delta
\end{aligned}
\end{equation}
This completes the proof.

\section{Proof of Proposition2: Iteration complexity of SiRAN}\label{theorem3}
This proof also follows the steps of Theorem 3 from \cite{501}. In the sinkhorn regularized adversarial setup, the parameters $\theta$ are updated using fixed step gradient descent. They iterate as,
\begin{equation}
 \begin{aligned}\label{sup:eq23}
     \theta_{t+1} = \theta_t - h_t \nabla (l(\theta_t)+S_{C,\varepsilon}\left(\mu_{\theta_t}\left(G_{\theta_t}\left(\mathbf{x}\right)\right), \nu\left(\mathbf{y}\right)\right) - g(\psi;G_{\theta_t}(\mathbf{x}))).
 \end{aligned}   
\end{equation}
 For simplicity, we denote $S_{C,\varepsilon}\left(\mu_{\theta_t}\left(G_{\theta_t}\left(\mathbf{x}\right)\right), \nu\left(\mathbf{y}\right)\right) \equiv S_{C,\varepsilon}(\mu_{\theta_t}, \nu)$. Using Taylor's expansion,
\begin{equation}
\begin{aligned}\label{sup:eq24}
    l(\theta_{t+1}) = l(\theta_t) + \nabla l(\theta_t)(\theta_{t+1} - \theta_t) 
    + \frac{1}{2}(\theta_{t+1} - \theta_t)^T \nabla^2 l(\theta_t)(\theta_{t+1} - \theta_t)
\end{aligned}    
\end{equation}
Now, substituting $\theta_{t+1} - \theta_t$ from equation \ref{sup:eq23}, and using triangle inequality and Cauchy-Schwarz inequality, equation \ref{sup:eq24} can be rewritten as,
\begin{equation}
\begin{aligned}\label{sup:eq25}
     l(\theta_{t+1}) \leq &
    l(\theta_t)\!- h_t||\nabla l(\theta_t)||^2\! - h_t||\nabla l(\theta_t)||.||\nabla S_{C,\varepsilon}(\mu_{\theta_t}, \nu)|| - h_t||\nabla l(\theta_t)||.||g(\psi;G_{\theta_t}(\mathbf{x}))|| \\
    &+ h_t^2||\nabla(l(\theta_t) + S_{C,\varepsilon}(\mu_{\theta_t}, \nu) - g(\psi;G_{\theta_t}(\mathbf{x})))||^2 \frac{||\nabla^2 l(\theta_t)||}{2}.
    \end{aligned}    
\end{equation}
Taking into account the assumptions in \textbf{Propositions 2} and utilizing Minkowski's inequality, equation \ref{sup:eq25} can be rewritten as,
\begin{equation}
\begin{aligned}\label{sup:eq26}
    l(\theta_{t+1}) \leq &
    l(\theta_t) - h_t||\nabla l(\theta_t)||^2 - h_t||\nabla l(\theta_t)||\eta 
    -  h_t||\nabla l(\theta_t)||\zeta \\ 
    & + h_t^2 (||\nabla(l(\theta_t)||^2 + ||S_{C,\varepsilon}(\mu_{\theta_t}, \nu)||^2 
     + ||g(\psi;G_{\theta_t}(\mathbf{x})))||^2) \frac{\beta_1}{2}.
\end{aligned}    
\end{equation}
Using $h_t = \frac{1}{\beta_1}$, from equation \ref{sup:eq26}, we can write,
\begin{equation}
\begin{aligned}
    l(\theta_{t+1}) &\leq  l(\theta_t) - \frac{h_t||\nabla l(\theta_t)||^2}{2} - h_t||\nabla l(\theta_t)||\eta -  h_t||\nabla l(\theta_t)||\zeta \\
    &+ \frac{h_t||S_{C,\varepsilon}(\mu_{\theta_t}, \nu)||^2}{2} + \frac{h_t||g(\psi;G_{\theta_t}(\mathbf{x})))||^2}{2}\\
    & \leq l(\theta_t) - \frac{h_t \epsilon_1^2}{2} - h_t\epsilon_1 \eta -h_t\epsilon_1\zeta
    + \frac{h_tL^4\Gamma^2\epsilon^2}{2}+\frac{h_tL^2\delta^2}{2}. 
\end{aligned}    
\end{equation}
Assuming T iterations to reach this $\epsilon_1$-stationary point, then for $t\leq T$, doing telescopic sum over $t$,
\begin{equation}
\begin{aligned}
    \sum_{t=0}^{T-1} l(\theta_{t+1}) -&  l(\theta_{t})  \leq \frac{-T(\epsilon_1^2+2\epsilon_1(\zeta+\eta) - L^2(\delta^2+L^2\Gamma^2\epsilon^2))}{2\beta_1}\\
    &\Rightarrow T\leq \frac{2(l(\theta_0) - l^*)\beta_1}{(\epsilon_1^2+2\epsilon_1(\zeta+\eta) - L^2(\delta^2+L^2\Gamma^2\epsilon^2))}
\end{aligned}    
\end{equation}
Therefore, using the iteration complexity definition of \cite{501}, we obtain,
\begin{equation}
\begin{aligned}
    \sup_{\theta_0\in \{\mathbb{R}^{h\times d_x},\mathbb{R}^{d_\mathbf{y}\times h}\},l\in \mathscr{L}} \mathcal{T}_{\epsilon_1}(A_h[l,\theta_0],l) 
    = \mathcal{O}\left(\frac{(l(\theta_0) - l^*)\beta_1}{\epsilon_1^2+2\epsilon_1(\zeta+\eta) - L^2(\delta^2+L^2\Gamma_{\varepsilon}^2\epsilon^2)}\right).
\end{aligned}
\end{equation}
This completes the proof of \textbf{Proposition 2}.
\subsection{Proof of Corollary 1}\label{appendC.1}
Using the similar arguments of \textbf{Proposition 2}, and taking first-order Taylor's approximation, we get
\begin{equation}
\begin{aligned}
    l(\theta_{t+1}) &= l(\theta_{t}) - h_t||\nabla l(\theta_t)||^2 - h_t||\nabla l(\theta_t)||.||\nabla S_{C,\varepsilon}(\mu_{\theta_t}, \nu)|| - h_t||\nabla l(\theta_t)||.||g(\psi;G_{\theta_t}(\mathbf{x}))||\\
    & \leq l(\theta_t) - h_t \epsilon_1^2 - h_t\epsilon_1 \eta -h_t\epsilon_1\zeta
\end{aligned}    
\end{equation}
Taking telescopic sum over $t$ for $t\leq T$, we get 
\begin{equation}
     \sum_{t=0}^{T-1} l(\theta_{t+1}) - l(\theta_{t}) \leq -Th_t(\epsilon_1^2+\epsilon_1(\zeta+\eta))
\end{equation}
So, using the definition of iteration complexity, we get,
\begin{equation}
\begin{aligned}
    \sup_{\theta_0\in \{\mathbb{R}^{h\times d_x},\mathbb{R}^{d_\mathbf{y}\times h}\},l\in \mathscr{L}} \mathcal{T}_{\epsilon_1}(A_h[l,\theta_0],l) 
    = \mathcal{O}\left(\frac{l(\theta_0) - l^*}{\epsilon_1^2+\epsilon_1(\zeta+\eta)}\right)
\end{aligned}
\end{equation}
This completes the proof.

\section{Empirical results to prove Proposition 1 and Proposition 2}\label{final}
We perform experiments to answer the proposed claims. There are two main aspects we want to investigate, firstly, how the choice of $\varepsilon$ affects the overall training of the model, and secondly, how it performs compared to other state-of-the-art learning methods like WGAN, WGAN+GP, and DCGAN. In both these cases,  we analyze the claims of mitigating vanishing gradients in the near-optimal region and fast convergence rate.
\subsection{Experiment set-up}
In this setting, we are performing a denoising operation on the MNIST dataset. For this 60000 samples of size $28\times 28$ are used during training, while 10000 are used for testing. The convergence criterion is set to be the mean square error of 0.04
or a maximum of 500 epochs. During training, we randomly add Gaussian noise to the training samples to perform the denoising task. The generator is designed as a simple autoencoder structure with an encoder and decoder each having 2 convolutional layers.
In practice, we notice that a discriminator with shallow layers is usually sufficient to offer a higher convergence rate. Therefore, we choose, a three-layer fully connected network with 1024 and 256 hidden neurons. All the layers are followed by ReLu activation except the output layer. For optimization, ADAM is utilized with a learning rate of 0.001 with a batch size of 64, and the discriminator is updated once for every single update of the generator.

\subsection{Result analysis}
Figure \ref{fig55}, shows how changing the value of $\varepsilon$ affects the overall iteration complexity. According to this figure, the instances $\varepsilon$ are very small and very large, and the learning behavior of the model becomes close to regular adversarial setup which ultimately results in more time requirement for convergence. This is because, as $\varepsilon \to 0$ and $\varepsilon \to \infty$, the smoothness of sinkhorn loss tends to become independent of $\varepsilon$ as depicted in \textbf{Theorem 1}, which makes the overall setup similar to the regular adversarial framework. This also affects the capability of mitigating the vanishing gradient problem as shown in Figure \ref{fig66} and \ref{fig77}. The gradients are approximated using spectral norm and they are moving averaged for better visualization. From Figure \ref{fig66}, in the case of the first layer, as $\varepsilon$ varies, the estimated gradients are similar near the optimal region. However, From Figure \ref{fig77}, we can see for the case of the hidden layer, gradient approximation is definitely affected by the choice of $\varepsilon$, and we can see as $\varepsilon \to 0$ and $\varepsilon \to \infty$, the gradients near-optimal region become smaller. However, using $\varepsilon=0.1$ tends to have higher gradients even if near the optimal region. Therefore, this model will have more capability of mitigating the vanishing gradient problem. Hence, we use this model to compare with other state-of-the-art learning methods.\par
We compare the rate of convergence and capability of handling the vanishing gradient of SIRAN with WGAN \cite{300}, WGAN+GP \cite{500}, and DCGAN. Figure \ref{fig555} clearly visualizes how our proposed framework has tighter iteration complexity than others, and reaches the convergence faster. This is consistent with the theoretical analysis presented in \textbf{Proposition 1}. Figure \ref{fig666} and \ref{fig777} also provides empirical evidence of the vanishing gradient issue presented in \textbf{Proposition 2}. Both for the first layer and hidden layer, as shown in Figure \ref{fig666} and \ref{fig777}, the approximated gradients are higher comparatively than others near the optimal region. This results in increasing the effectiveness of SIRAN in handling the issue of the vanishing gradient problem as discussed in above theorems.
\begin{figure}[!htb]
\centering
\minipage{0.32\textwidth}
  \includegraphics[width=\linewidth]{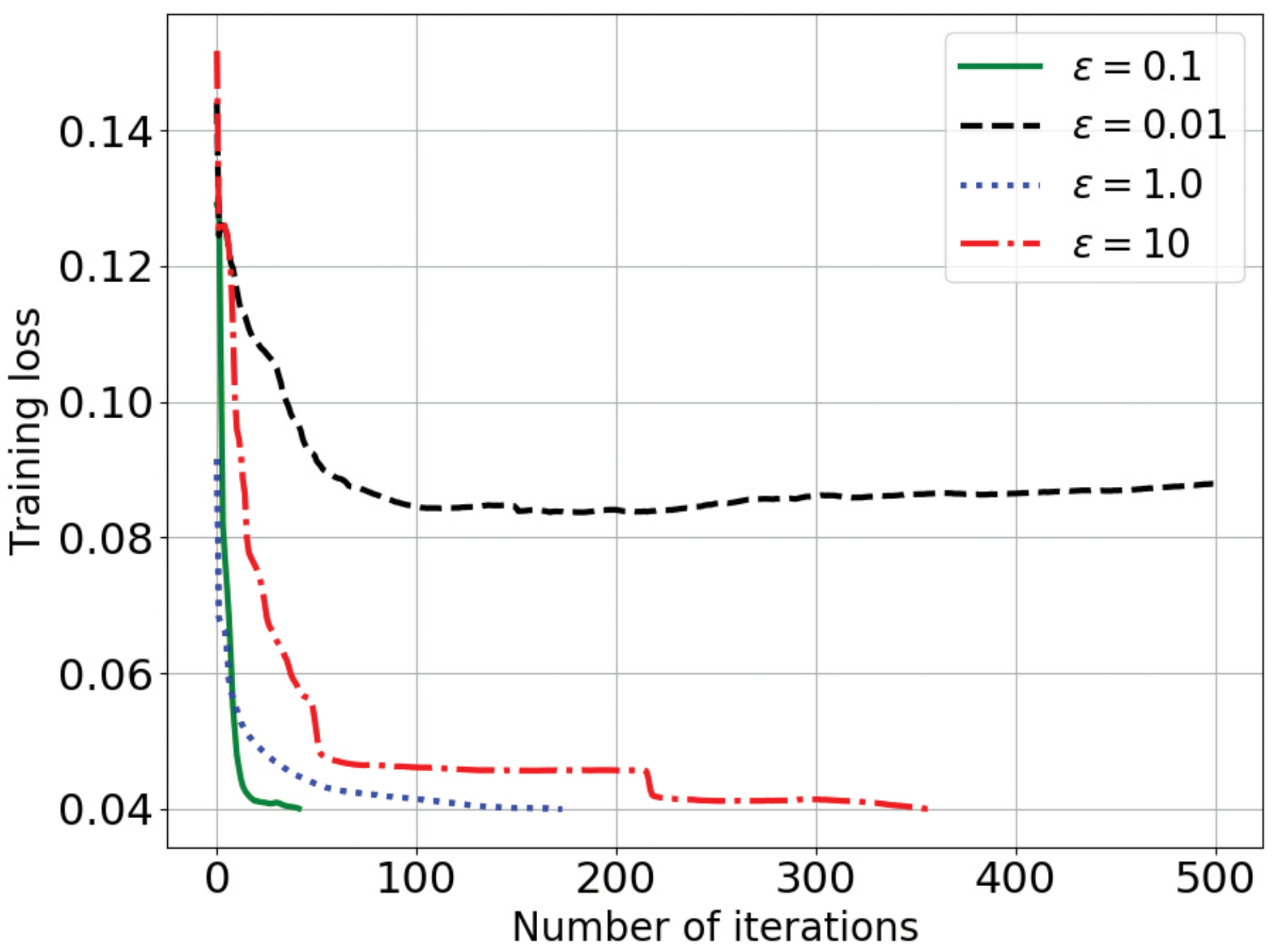}
  \caption{Training Loss for variation of $\varepsilon$}\label{fig55}
\endminipage \quad
\minipage{0.32\textwidth}
  \includegraphics[width=\linewidth]{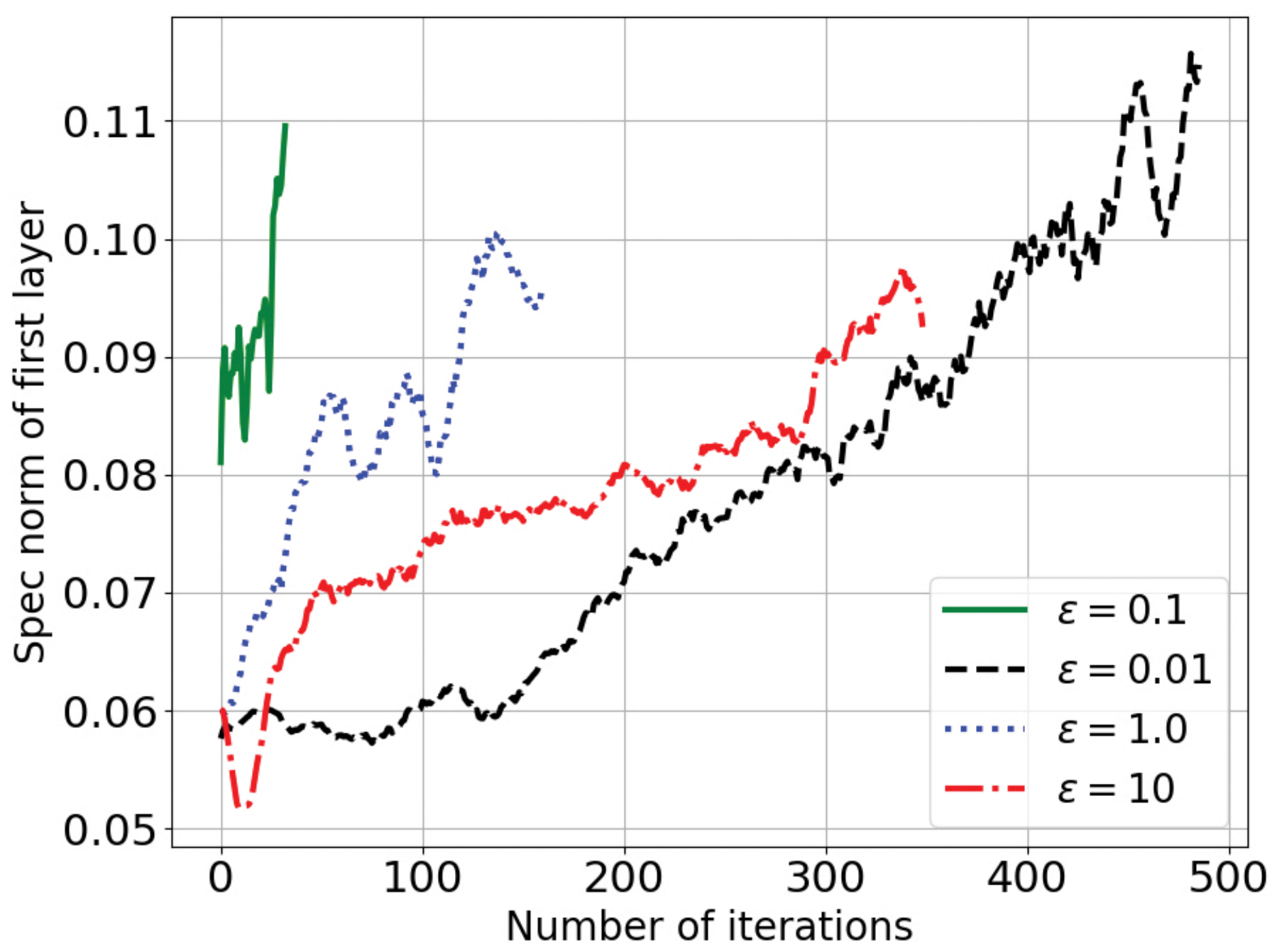}
  \caption{Approximated Spectral norm of gradients of first layer for different values of $\varepsilon$}\label{fig66}
\endminipage \quad
\minipage{0.3\textwidth}
  \includegraphics[width=\linewidth]{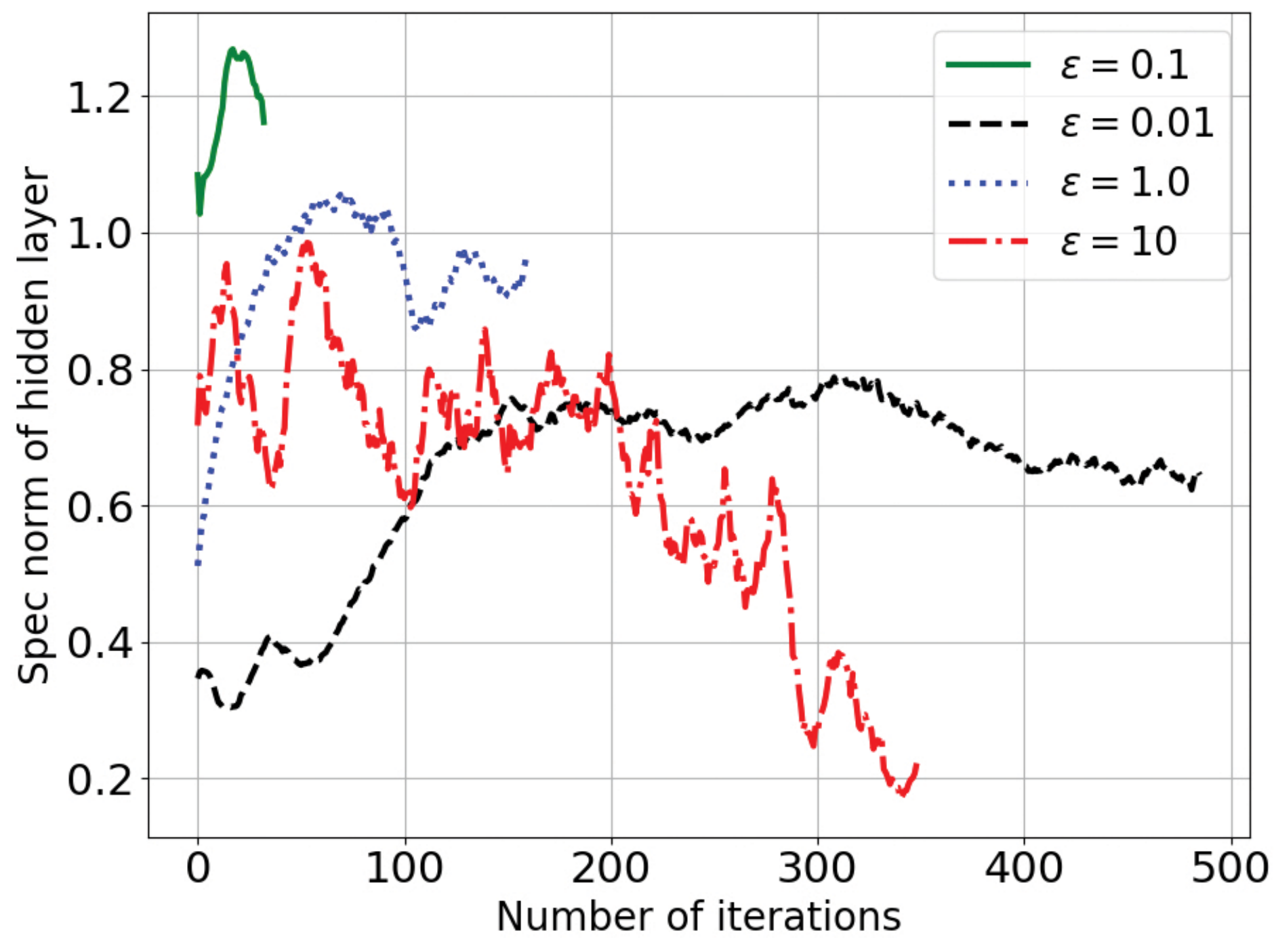}
  \caption{Approximated Spectral norm of gradients of hidden layer for different values of $\varepsilon$}\label{fig77}
  \endminipage
\end{figure}

\begin{figure}[!htb]
\centering
\minipage{0.32\textwidth}
  \includegraphics[width=\linewidth]{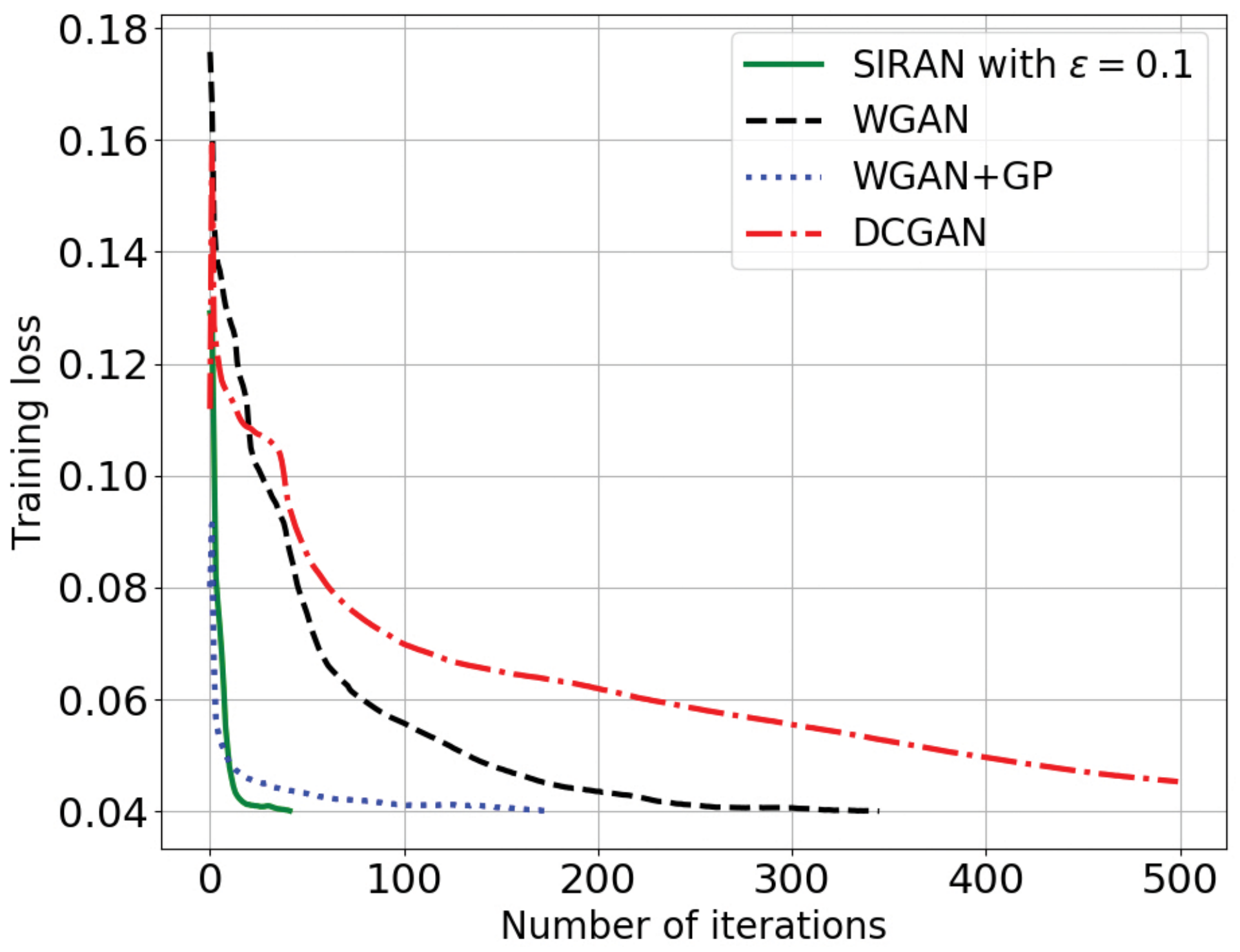}
  \caption{Training Loss for different learning methods}\label{fig555}
\endminipage \quad
\minipage{0.32\textwidth}
  \includegraphics[width=\linewidth]{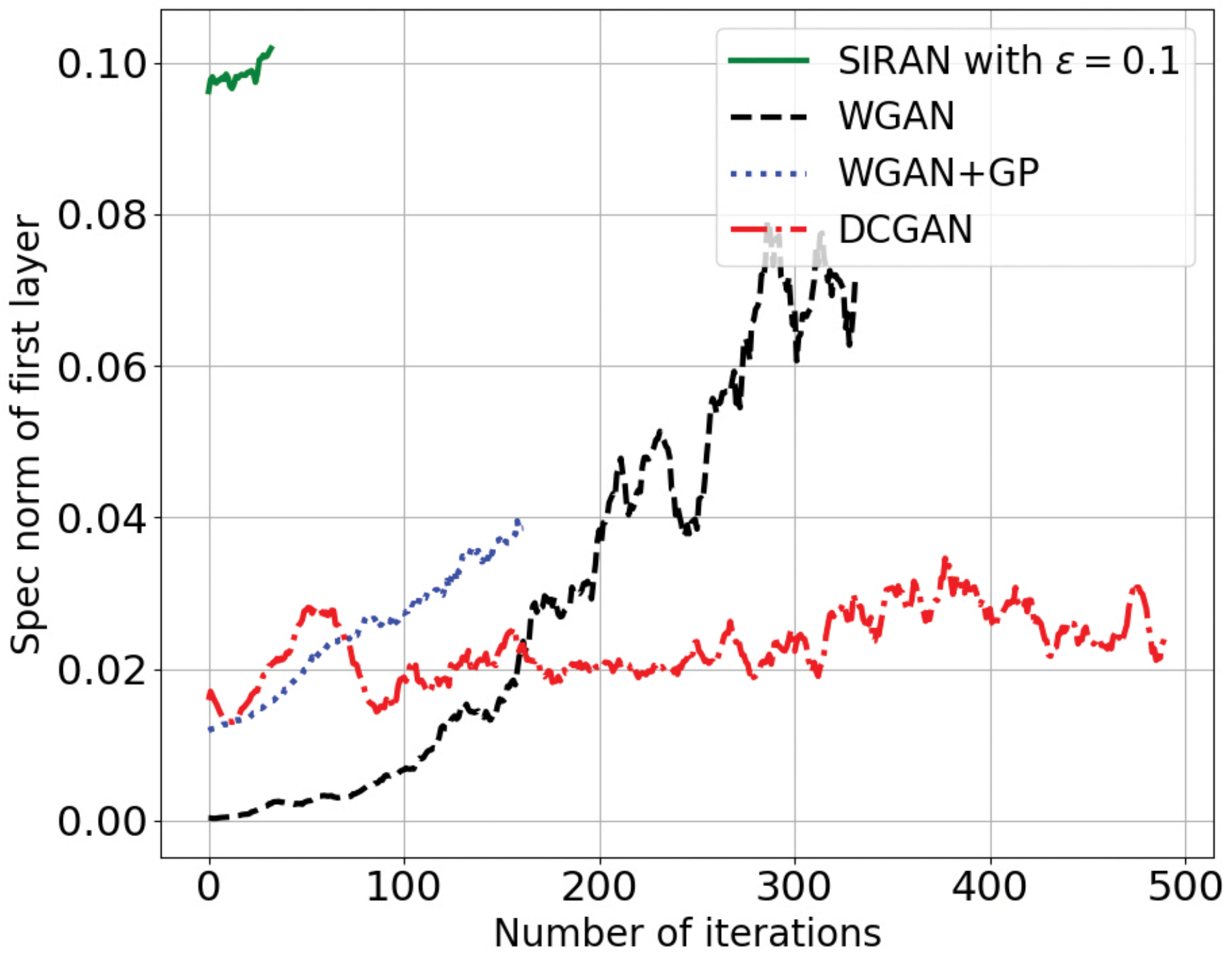}
  \caption{Approximated Spectral norm of gradients of first layer for different learning methods}\label{fig666}
\endminipage \quad
\minipage{0.3\textwidth}
  \includegraphics[width=\linewidth]{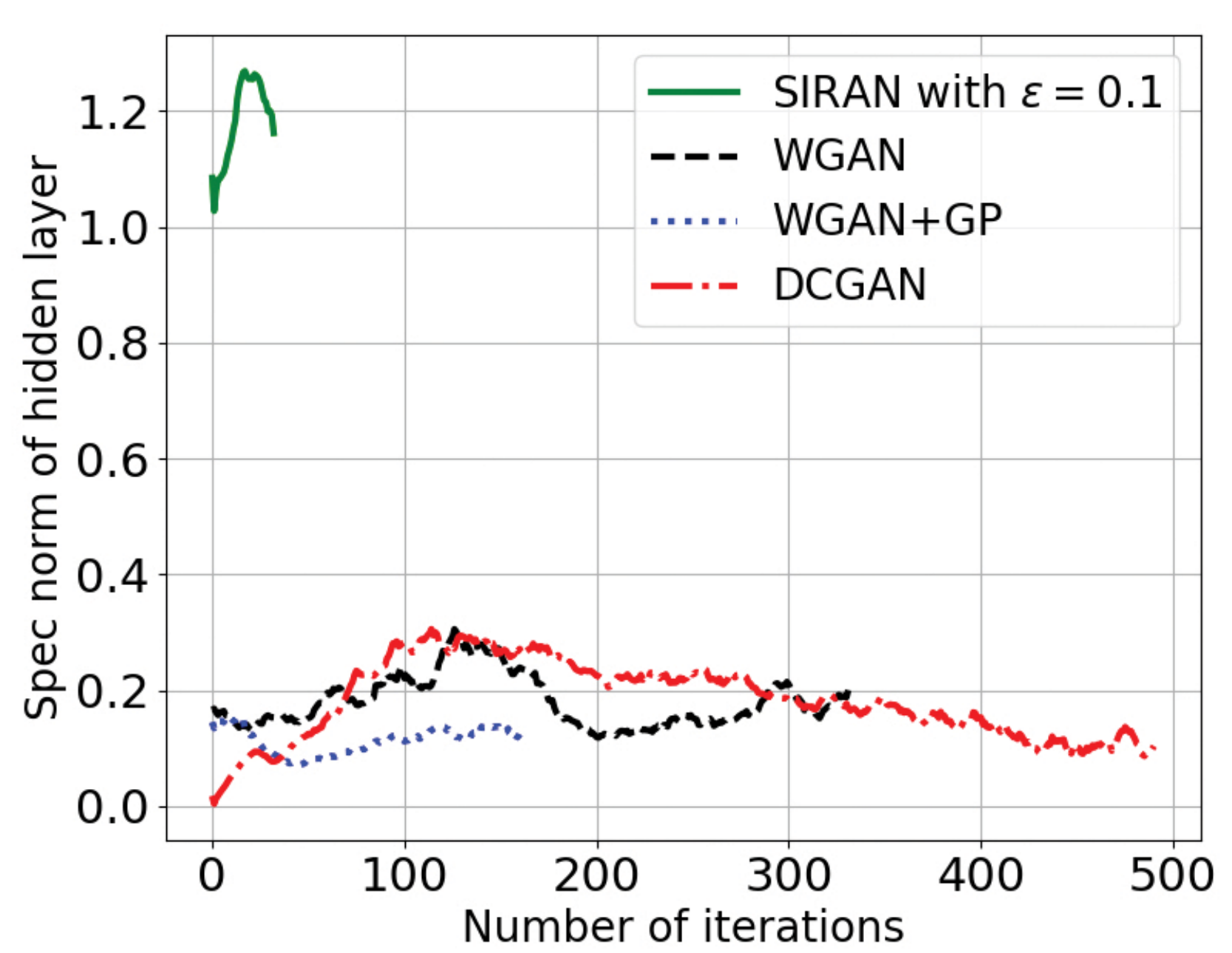}
  \caption{Approximated Spectral norm of gradients of hidden layer for different learning methods}\label{fig777}
  \endminipage
\end{figure}

\bibliographystyle{splncs04}
\bibliography{main}

\bibliographystyle{splncs04}
\bibliography{Main}

%




\end{document}